\documentclass[letterpaper,11pt]{article}

\usepackage{cmap}
\usepackage[utf8x]{inputenc}
\usepackage[T1]{fontenc}
\usepackage{amsthm}
\usepackage{listings}
\usepackage{multirow}
\usepackage{amsfonts}
\usepackage{parskip}
\usepackage{enumitem}
\usepackage{ytableau}
\usepackage{subcaption}
\usepackage{tabularx}
\usepackage{booktabs}

\ytableausetup{
    boxsize=1.8em,
    centertableaux
}

\usepackage{caption}
\usepackage{subcaption}

\usepackage[pdftex]{graphicx} %
\usepackage[english]{babel} %
\usepackage[colorlinks,citecolor=blue,linkcolor=red]{hyperref}
\usepackage{calc} %
\usepackage{enumitem} %
\setlist{nosep}

\usepackage[margin=0.94in]{geometry}

\usepackage[all]{nowidow} %
\usepackage[protrusion=true,expansion=true]{microtype} %

\usepackage{svg}
\usepackage{amsmath}
\usepackage{pdfpages}

\usepackage{tikz}
\usetikzlibrary{quantikz2}

\usepackage{float}
\floatstyle{boxed}
\newfloat{boxedalgorithm}{tbp}{loa}
\floatname{boxedalgorithm}{Algorithm}

\newcommand{\eq}[1]{\hyperref[eq:#1]{(\ref*{eq:#1})}}
\renewcommand{\sec}[1]{\hyperref[sec:#1]{Section~\ref*{sec:#1}}}
\newcommand{\thm}[1]{\hyperref[thm:#1]{Theorem~\ref*{thm:#1}}}
\newcommand{\lem}[1]{\hyperref[lem:#1]{Lemma~\ref*{lem:#1}}}
\newcommand{\cor}[1]{\hyperref[cor:#1]{Corollary~\ref*{cor:#1}}}
\newcommand{\itm}[1]{\hyperref[itm:#1]{\ref*{itm:#1}}}
\newcommand{\app}[1]{\hyperref[app:#1]{Appendix~\ref*{app:#1}}}
\newcommand{\dfn}[1]{\hyperref[dfn:#1]{Definition~\ref*{dfn:#1}}}
\newcommand{\fig}[1]{\hyperref[fig:#1]{Figure~\ref*{fig:#1}}}
\newcommand{\clm}[1]{\hyperref[clm:#1]{Claim~\ref*{clm:#1}}}
\newcommand{\alg}[1]{\hyperref[alg:#1]{Algorithm~\ref*{alg:#1}}}
\newcommand{\stp}[1]{\hyperref[stp:#1]{Step~\ref*{stp:#1}}}
\newcommand{\asm}[1]{\hyperref[asm:#1]{Assumption~\ref*{asm:#1}}}
\newcommand{\prot}[1]{\hyperref[prot:#1]{Protocol~\ref*{prot:#1}}}
\newcommand{\prob}[1]{\hyperref[prob:#1]{Problem~\ref*{prob:#1}}}
\newcommand{\rmk}[1]{\hyperref[rmk:#1]{Remark~\ref*{rmk:#1}}}
\newcommand{\tabl}[1]{\hyperref[tabl:#1]{Table~\ref*{tabl:#1}}}
\newcommand{\prop}[1]{\hyperref[prop:#1]{Proposition~\ref*{prop:#1}}}

\theoremstyle{definition}
\newtheorem{theorem}{Theorem}[section]
\newtheorem{lemma}[theorem]{Lemma}
\newtheorem{definition}[theorem]{Definition}

\newtheorem{corollary}[theorem]{Corollary}

\newcounter{prob}
\newtheorem{problem}[prob]{Problem}

\renewcommand{\ket}[1]{\vert #1\rangle}
\renewcommand{\bra}[1]{\langle #1\vert}

\DeclareMathOperator{\tr}{tr}
\DeclareMathOperator*{\Exp}{\mathbb{E}}
\renewcommand{\epsilon}[0]{\varepsilon}
\renewcommand{\i}[0]{\mathrm{i}}
\newcommand{\wt}[0]{\mathrm{wt}}
\newcommand{\bit}[0]{\{0,1\}}
\newcommand{\Binomial}[0]{\mathrm{Binomial}}

\title{Efficient Quantum State Identity Testing}
\author{Shih-Han Hung\footnote{Department of Electrical Engineering and Center for Quantum Science and Engineering, National Taiwan University. Email: shihhanh@ntu.edu.tw} 
\and Ming-Hsien Tsai\footnote{Department of Electrical Engineering, National Taiwan University. Email: minghsien@ntu.edu.tw}}
\date{}

\begin{document} 

\maketitle

\begin{abstract}
    We study the following quantum state identity problem: given access to copies of $n$ unknown pure quantum states, the goal is to determine whether every pair has fidelity at least $1-\varepsilon$ or some pair has fidelity at most $\varepsilon$. This problem relaxes the identical-or-orthogonal promise considered in previous work. For any fixed overlap parameter $\varepsilon\in(0,1/4)$, we present two quantum algorithms with complementary guarantees on circuit depth and sample complexity. The first runs in constant depth using parallel swap tests and takes $O(\log(n/\delta)\log(2/\delta))$ copies of each input state to succeed with probability at least $1-\delta$. The second uses the Schur transform to reduce the sample complexity to $O(\log(n/\delta))$, through an analysis relating Jucys--Murphy elements to Schur sampling. We also prove a matching sample complexity lower bound of $\Omega(\log(n/\delta))$ for any quantum algorithm with failure probability at most $\delta\le 1/2-c$, for any constant $c>0$, establishing that the second algorithm achieves the optimal dependence on both $n$ and $\delta$ for fixed $\varepsilon$.
\end{abstract}

\section{Introduction}

The \textit{quantum state identity problem} asks whether a collection of $n$ unknown quantum states all represent the same state.  A standard formulation assumes an exact promise: every pair of input states is either identical or orthogonal, and the task is to decide whether all of them are identical.  One may ask what remains possible when this promise is relaxed.  In particular, in the first case the states may only be close to one another, while in the second case some pair may have small, but nonzero, fidelity.

When the promise is exact, the standard quantum state identity problem is well studied. Kada, Nishimura, and Yamakami \cite{Kada_2008} showed that the \textit{Permutation Test} is optimal under the one-sided error requirement for arbitrary $n$. They also showed that, when $n$ is prime, the \textit{Circle Test} achieves the same optimal soundness with a much simpler circuit, since it only requires a Fourier transform over the cyclic group $C_n$, on a control register of dimension $n$, instead of a Fourier transform over the symmetric group $S_n$, on a control register of dimension $n!$ \cite{buhrman2026permutationtestsquantumstate}. Buhrman, Grinko, Verduyn Lunel, and Weggemans \cite{buhrman2026permutationtestsquantumstate} showed that the Permutation Test remains optimal when the one-sided error requirement is relaxed, in the sense that it maximizes the average success probability when the yes and no cases are equally likely. They also proposed, for the special case $n=2^m$, a simpler protocol based on an iterated wreath product of cyclic groups that uses only $n-1$ SWAP tests to approximate the Permutation Test.

These results rely heavily on the exact identical-or-orthogonal promise. The exact promise makes the underlying symmetry particularly transparent, but it is also quite restrictive. If several devices are intended to prepare the same state, small preparation errors will generally produce states that are close but not identical. Likewise, two states intended to be distinguishable need not be exactly orthogonal, and may have a small but nonzero overlap. This motivates formulating a robust version of quantum state identity in terms of quantitative bounds on pairwise fidelities.

A first version of this question already appeared in the closing discussion of \cite{Kada_2008}, where the authors retain perfect identity in the yes case but relax the no case to the existence of a pair with overlap at most $\varepsilon$. More recently, Buhrman et al. \cite{buhrman2026permutationtestsquantumstate} explicitly proposed an approximate promise on both sides of the problem: either the overlaps of all pairs are close to one, or at least one pair has overlap close to zero.

We write $\mathrm{QSI}_{n}^\varepsilon$ for the $\varepsilon$-approximate quantum state identity problem on $n$ pure quantum states; the formal definition of our relaxed problem is given below. Our use of fidelity differs slightly from the overlap formulation in \cite{buhrman2026permutationtestsquantumstate}, but captures the same qualitative question: can identity still be tested efficiently when the states are only approximately identical or approximately orthogonal?

\begin{definition}[$\mathrm{QSI}_{n}^\varepsilon$ under two-sided relaxation]\label{dfn:two-sided QSI}
    Given a collection of $n\ge 2$ pure quantum states $\ket{\psi_1}, \dots, \ket{\psi_n}$ as inputs, each of local dimension $d\ge 2$, and an overlap parameter $\varepsilon\in[0,1/2)$, we consider the promise problem defined by the following conditions:
    \begin{itemize}
        \item \textbf{Yes instance:} For all pairs of distinct indices $i, j \in [n]$, it holds that $|\braket{\psi_i}{\psi_j}|^2 \ge 1 - \varepsilon$.
        \item \textbf{No instance:} There exists at least one pair of distinct indices $a, b \in [n]$ such that $|\braket{\psi_a}{\psi_b}|^2 \le \varepsilon$.
    \end{itemize}
\end{definition}

The restriction $\varepsilon<1/2$ ensures that the yes and no promises are disjoint, and we assume $n\ge 2$ and $d\ge 2$ throughout, since otherwise there are no no instances.
The output of any algorithm for $\mathrm{QSI}_{n}^\varepsilon$ is either $\mathtt{accept}$ or $\mathtt{reject}$.
We say that the algorithm has completeness error at most $\delta$ if, for every yes instance, it outputs \texttt{accept} with probability at least $1-\delta$,
and that it has soundness error at most $\delta$ if, for every no instance, it outputs \texttt{reject} with probability at least $1-\delta$.
An algorithm solves $\mathrm{QSI}_{n}^\varepsilon$ with error $\delta$ if both its completeness and soundness errors are at most $\delta$.

Under the additional promise that every pair of input states is either identical or orthogonal, the Permutation Test uses one copy of each input state, achieves perfect completeness and soundness error at most $1/n$, and is optimal among all one-copy measurements, both under the one-sided error requirement \cite{Kada_2008} and with respect to the average success probability when the yes and no cases are equally likely \cite{buhrman2026permutationtestsquantumstate}. At $\epsilon=0$, \dfn{two-sided QSI} only requires a single orthogonal pair in a no instance, so this soundness guarantee does not apply to the problem defined here: for instance, for $n=4$ the no instance $\ket{0},\ket{1},\ket{+},\ket{+}$ is accepted by the Permutation Test with probability $7/24>1/4$. Moreover, the perfect
completeness of the Permutation Test is not robust under our two-sided relaxation $\mathrm{QSI}_n^\epsilon$. To see this, let $n=2m$, set
$\ket{\psi_1}=\cdots=\ket{\psi_m}=\ket{0}$, and let
$\ket{\psi_{m+1}}=\cdots=\ket{\psi_{2m}}=\ket{\phi}$, where
$\ket{\phi}=\sqrt{1-\epsilon}\ket{0}+\sqrt{\epsilon}\ket{1}$.
This is a valid yes instance, since every pair has fidelity at least
$1-\epsilon$. Nevertheless, a direct calculation shows that the
acceptance probability of the permutation test is
\begin{align}
p_{\mathrm{accept}}
&\leq(n+1)\left(\frac{1+\sqrt{1-\epsilon}}{2}\right)^n.
\end{align}
Indeed, for qubits the projector onto the symmetric subspace can be written as $\Pi_{\mathrm{sym}}=(n+1)\int\proj{\psi}^{\otimes n}\,d\psi$, where the integral is over the Haar measure on pure qubit states, so that $p_{\mathrm{accept}}=\bra{\Psi}\Pi_{\mathrm{sym}}\ket{\Psi}\le(n+1)\max_{\psi}\left(|\braket{\psi}{0}|\,|\braket{\psi}{\phi}|\right)^{n}$ for $\ket{\Psi}=\ket{0}^{\otimes m}\otimes\ket{\phi}^{\otimes m}$; the bound then follows from $|\braket{\psi}{0}|\,|\braket{\psi}{\phi}|\le\frac12\left(|\braket{\psi}{0}|^2+|\braket{\psi}{\phi}|^2\right)\le\frac{1+|\braket{0}{\phi}|}{2}$, where the last step uses that the nonzero eigenvalues of $\proj{0}+\proj{\phi}$ are $1\pm|\braket{0}{\phi}|$.
For every fixed $\epsilon>0$, this probability is exponentially small in
$n$. Hence, the permutation test may reject a valid yes instance with
probability approaching one.

This example rules out a direct extension of the one-copy permutation
test to the relaxed problem. We therefore allow the algorithm to use
multiple copies of each input state and ask how many copies are needed.
We use the number of copies required of each input state as our primary resource measure, while also keeping track of the gate count and depth of the circuits.

The remaining difficulty is locating a bad pair. A no instance may
contain only one low-fidelity pair, whose indices are unknown. Testing
all pairs separately would use too many copies, whereas a uniformly
chosen pair hits the bad pair with probability only $\Theta(1/n^2)$. Our algorithms avoid this direct search by exploiting additional structure in the input.

\subsection{Our results}

In this paper, we present two algorithms with complementary guarantees on sample complexity and circuit depth, which resolve the open problem raised in \cite{buhrman2026permutationtestsquantumstate}.
Given $n$ pure quantum states and an overlap parameter $\varepsilon\in(0,1/4)$, the goal is to distinguish between two cases: either every pair of states has fidelity at least $1-\varepsilon$, or there exists a pair with fidelity at most $\varepsilon$.

\subsubsection{Algorithms for quantum state identity}

Our first result is a constant-depth algorithm that uses parallel swap tests applied to random matchings on the input states. 

\begin{theorem}[Informal statement of \thm{randommatching-soundness} and
\thm{randommatching-completeness}]\label{thm:informal-random-matching}
For every $\epsilon\in(0,1/4)$ and $\delta\in(0,1)$, let
$\gamma=\frac12-\sqrt{\epsilon}$. There exists a randomized
constant-depth quantum algorithm that solves
$\mathrm{QSI}_n^\epsilon$ with error probability at most $\delta$
using
\begin{align}\label{eq:informal-swap}
O\left(
\frac{1}{\gamma^2}
\log\left(\frac{n}{\delta}\right)
\log\left(\frac{2}{\delta}\right)
\right)
\end{align}
copies of each input state.
\end{theorem}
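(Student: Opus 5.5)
The plan is to design the algorithm as repeated rounds of parallel swap tests on random perfect matchings, and to analyze it in two layers. \emph{Inner layer:} estimate the pairwise swap-test statistic of one matched pair to accuracy about $\gamma$. \emph{Outer layer:} repeat over independent random matchings so that the bad pair is caught with high probability.

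First, the basic estimator. The swap test on $\ket{\psi_i}\otimes\ket{\psi_j}$ accepts with probability $\frac{1}{2}(1+F_{ij})$, where $F_{ij}=|\braket{\psi_i}{\psi_j}|^2$. A single pair cannot separate $F\ge 1-\epsilon$ from $F\le\epsilon$ with few copies when the bad pair is hidden. I would therefore exploit a triangle-type structure. Fidelity in angle form, $\theta_{ij}=\arccos\sqrt{F_{ij}}$, is a metric. In a no instance with bad pair $(a,b)$, every third state $k$ satisfies $\theta_{ak}+\theta_{kb}\ge\arccos\sqrt{\epsilon}$. Hence for every $k$, at least one of $(a,k),(k,b)$ has fidelity bounded away from $1-\epsilon$. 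Quantitatively, I expect this to give $\sqrt{F}\le$ roughly $\frac12+\sqrt\epsilon$ via $\cos(x)$ bounds, which is where $\gamma=\frac12-\sqrt\epsilon$ enters.

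Consequently a no instance contains $\Omega(n)$ ``far'' pairs. They form a set of edges covering a star-like structure: either $a$ or $b$ is far from at least half of the other states. A uniformly random perfect matching (pad with a dummy if $n$ is odd) matches $a$ to a state far from it with probability $\Omega(1)$. This fixes the outer layer. Draw $R=O(\log(2/\delta))$ independent random matchings, so that with probability $\ge 1-\delta/2$ some round contains a far pair.

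For the inner layer, within each round every matched pair $(i,j)$ receives $m=O(\gamma^{-2}\log(n/\delta))$ swap tests in parallel, one fresh copy of each state per test. The empirical acceptance rate then estimates $\frac12(1+F_{ij})$ within $\gamma/4$ for all $\le n/2$ pairs simultaneously, with failure $\le\delta/(2R)$ by Hoeffding and a union bound. We reject iff some pair's estimate falls below a threshold placed midway between $\frac12(1+(1-\epsilon))$ and the far-pair bound.

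Completeness follows from the union bound over all pairs and rounds: every true $F_{ij}\ge1-\epsilon$, so the error is $\le\delta$. Soundness combines the outer event with the inner one. The total copies per state come to $R\cdot m=O(\gamma^{-2}\log(n/\delta)\log(2/\delta))$. All tests in a round are disjoint and act in parallel, and rounds use disjoint copies, so the circuit has constant depth.

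I expect the main obstacle to be the quantitative far-pair lemma: showing that the gap between yes-pair fidelity $1-\epsilon$ and the best possible fidelity of one of $(a,k),(k,b)$ is $\Omega(\gamma)$ in the swap-test bias, uniformly in $\epsilon<1/4$. I would first try the Bures-angle triangle inequality and then a concavity bound on $\cos^2$. If the margin comes out as a different function of $\epsilon$, I would instead compare with the threshold $\frac12$ directly, i.e.\ show $\max(\sqrt{F_{ak}},\sqrt{F_{kb}})$ can't both exceed $\frac12+\sqrt\epsilon$ in some averaged sense.
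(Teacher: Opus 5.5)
\textbf{Gap: the far-pair bound.} Your overall plan matches the paper's: a Bures-angle witness lemma, $O(\log(2/\delta))$ independent random matchings, and a $k$-fold swap test with $k=O(\gamma^{-2}\log(n/\delta))$ on each matched pair, analyzed with Hoeffding and union bounds. However, the quantitative far-pair bound you predict is false, and so is your fallback. Since you place the rejection threshold using this bound, the step fails as written.

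Take $\ket{\psi_a}=\ket{0}$, $\ket{\psi_b}=\ket{1}$, $\ket{\psi_k}=\ket{+}$. This is a no instance for every $\epsilon$, yet $\sqrt{F_{ak}}=\sqrt{F_{bk}}=1/\sqrt2>\frac12+\sqrt\epsilon$ whenever $\epsilon<0.04$. So neither $\min\{\sqrt{F_{ak}},\sqrt{F_{bk}}\}\le\frac12+\sqrt\epsilon$ nor any version of it holds. In the other direction, for $\epsilon\ge 0.17$ one has $(\frac12+\sqrt\epsilon)^2\ge 1-\epsilon$. There your guessed bound would not separate witness pairs from yes pairs at all, so it cannot be where $\gamma$ comes from.

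\textbf{The fix.} The repair is the computation you name, done correctly. From $\theta_{ab}\le\theta_{ak}+\theta_{kb}$ you get $\max\{\theta_{ak},\theta_{bk}\}\ge\frac12\theta_{ab}\ge\frac12\arccos\sqrt\epsilon$. Monotonicity of $\cos^2$ on $[0,\pi/2]$ and the half-angle identity then give
$\min\{F_{ak},F_{bk}\}\le\cos^2\bigl(\frac12\arccos\sqrt\epsilon\bigr)=\frac{1+\sqrt\epsilon}{2}$.
This bound is tight: take $\ket{\psi_k}\propto\ket{\psi_a}+\ket{\psi_b}$ with aligned phases. The swap-test acceptance probabilities are then at least $1-\frac{\epsilon}{2}$ for yes pairs and at most $\frac{3+\sqrt\epsilon}{4}$ for witness pairs. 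Their gap is
$\frac{1-2\epsilon-\sqrt\epsilon}{4}=\frac{(1-2\sqrt\epsilon)(1+\sqrt\epsilon)}{4}=\frac{\gamma(1+\sqrt\epsilon)}{2}\ge\frac{\gamma}{2}$.
This factorization is the real source of both $\gamma$ and the restriction $\epsilon<1/4$. With it, your accuracy $\gamma/4$ and the midpoint threshold $\frac{7-2\epsilon+\sqrt\epsilon}{8}$ work, and the rest of your argument goes through.

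\textbf{Matching step and minor points.} Your matching step differs from the paper's and is valid, indeed simpler. By pigeonhole, one endpoint, say $a$, is far from at least $\lceil (n-2)/2\rceil$ of the other states and also from $b$. Its partner in a uniform perfect matching is uniform over the other $n-1$ indices, so a far edge is sampled with probability at least $\frac{n}{2(n-1)}>\frac12$.

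The paper instead partitions $[n]$ into indices far from $a$ and indices far from $b$, with parts of sizes $s$ and $n-s$. It bounds the probability that neither endpoint is matched across by $\frac{(s-1)(n-s-1)}{(n-1)(n-3)}\le\frac13$ via AM--GM. This only improves the number of rounds from $\log_2(2/\delta)$ to $\log_3(2/\delta)$.

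Two smaller points. First, constant depth independent of $d$ does not follow from parallelism alone, because a coherent swap test on $\lceil\log_2 d\rceil$-qubit registers has depth growing with $d$. The paper uses a destructive swap test: Bell measurements on corresponding qubit pairs, accepting on an even number of singlet outcomes. This reproduces the acceptance probability $\frac12(1+F)$ in depth $O(1)$. Second, for odd $n$ the padding must not hurt completeness. Either leave one index untested in each round, or, as the paper does, append an extra copy of $\ket{\psi_1}$; an arbitrary dummy state could cause yes instances to be rejected.
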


Detecting a low-fidelity pair in a no instance can be difficult, since we are only guaranteed that one such pair exists.
Performing swap tests on a random matching consisting of $n/2$ disjoint pairs captures this particular pair with probability only $O(1/n)$.
To deal with this difficulty, we show that the existence of one such pair implies the existence of $\Omega(n)$ pairs with fidelity at most $1-\epsilon-\Omega(\gamma)$.
As a result, a random matching contains a detectable low-fidelity pair with probability $\Omega(1)$.
With this observation, we sample $O(\log(2/\delta))$ independent random matchings to amplify the probability of pairing two quantum states with sufficiently low fidelity to $1-O(\delta)$.

The above steps effectively reduce $\mathrm{QSI}_n^\epsilon$ to deciding whether the fidelity of a pair of quantum states is at least $1-\epsilon$ or at most $1-\epsilon-\Omega(\gamma)$.
Conditioned on the event that such a pair has been sampled, a single swap test distinguishes the two cases with bias $\Omega(\gamma)$.
Using $O(\frac{1}{\gamma^2}\log(n/\delta))$ repetitions allows us to amplify the success probability to $1-O(\delta/n)$.
By a union bound, we conclude that the two-sided error can be controlled to within $\delta$ using the number of copies in \eq{informal-swap}.

While the algorithm establishing \thm{informal-random-matching} is very simple, it incurs an extra $O(\log(2/\delta))$ multiplicative factor in the sample complexity.
To further improve the sample complexity, we present a sample-optimal algorithm based on the Schur transform and Jucys--Murphy operators.

\begin{theorem}[Informal statement of \thm{jucysmurphy-soundness} and
\thm{jucysmurphy-completeness}]\label{thm:informal-schur}
For every $\epsilon\in(0,1/4)$ and $\delta\in(0,1)$, let
$\gamma=\frac12-\sqrt{\epsilon}$. There exists a quantum
algorithm that solves $\mathrm{QSI}_{n}^\varepsilon$ with error probability
at most $\delta$ using
\begin{align}
O\left(
\frac{1}{\gamma^2}
\log\left(\frac{n}{\delta}\right)
\right)
\end{align}
copies of each input state.
\end{theorem}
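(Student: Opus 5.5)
The plan is to run weak Schur sampling together with its Gelfand--Tsetlin refinement (the Schur transform) on product states $\ket{\psi_{\pi(1)}}\otimes\cdots\otimes\ket{\psi_{\pi(n)}}$, where $\pi$ is a fresh uniformly random ordering in each round, and to read off the eigenvalues of the Jucys--Murphy elements $X_j=\sum_{i<j}\mathrm{SWAP}_{ij}$. The basic identity is that on a product state $\bigotimes_i\ket{\phi_i}$ we have $\langle X_j\rangle=\sum_{i<j}|\braket{\phi_i}{\phi_j}|^2$. The $X_j$ commute and are simultaneously diagonalized by the Young--Yamanouchi basis. On a basis vector labelled by a standard tableau, $X_j$ acts as the content $c_j$ of the box containing $j$, and $|c_j|\le j-1$. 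So measuring the Schur basis (the irrep $\lambda$ and the Gelfand--Tsetlin path) samples $(c_2,\dots,c_n)$, and $\Exp[c_j]=\sum_{i<j}F_{\pi(i)\pi(j)}$, where $F_{ab}=|\braket{\psi_a}{\psi_b}|^2$.

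\textbf{Statistic and concentration.} Use $k$ rounds, which consumes $k$ copies of each input. In round $t$, for each input index $i$ with $p=\pi_t(i)\ge 2$, set $Y_i^{(t)}=c_p/(p-1)\in[-1,1]$. Conditioned on $p$, the set of predecessors of $i$ is a uniformly random $(p-1)$-subset of $[n]\setminus\{i\}$. Hence $\Exp[Y_i^{(t)}\mid p]=\mu_i:=\frac{1}{n-1}\sum_{k\ne i}F_{ik}$ for every $p\ge 2$. Rounds in which $i$ sits at position $1$ are discarded, so each $i$ retains about $k(1-1/n)$ i.i.d.\ bounded samples. By Hoeffding's inequality, choosing $k=O(\gamma^{-2}\log(n/\delta))$ and taking a union bound over the $n$ indices, all empirical means $\hat\mu_i$ are within $\gamma/3$ of $\mu_i$ simultaneously with probability at least $1-\delta$. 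The algorithm accepts if and only if $\min_i\hat\mu_i\ge 1-\varepsilon-\gamma/2$.

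\textbf{Completeness.} In a yes instance every $F_{ik}\ge 1-\varepsilon$, so every $\mu_i\ge 1-\varepsilon$, and acceptance follows.

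\textbf{Soundness.} This is the step that needs care. Let $(a,b)$ be the bad pair, with $F_{ab}\le\varepsilon$. For any state $\ket{\psi_k}$, the quantity $F_{ak}+F_{bk}=\bra{\psi_k}(\proj{\psi_a}+\proj{\psi_b})\ket{\psi_k}$ is at most the top eigenvalue $1+|\braket{\psi_a}{\psi_b}|\le 1+\sqrt\varepsilon$; this is the same eigenvalue fact used in the introduction. Summing over $k\ne a,b$ gives
\begin{align}
\mu_a+\mu_b\le\frac{(n-2)(1+\sqrt\varepsilon)+2\varepsilon}{n-1},
\end{align}
so $\min(\mu_a,\mu_b)\le\frac{1+\sqrt\varepsilon}{2}+O(1/n)$. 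Note that $1-\varepsilon-\frac{1+\sqrt\varepsilon}{2}=\frac{(1-2\sqrt\varepsilon)(1+\sqrt\varepsilon)}{2}\ge\gamma$. Therefore some $\mu_i\le 1-\varepsilon-\gamma+O(1/n)$, and with the concentration above the algorithm rejects.

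The main obstacle is that this $O(1/n)$ slack, and the thresholds, must be handled when $n=O(1/\gamma)$. For such small $n$ I would instead run the same estimator on all $\binom n2$ pairs directly via swap tests. This costs $O(\gamma^{-2}\log(n/\delta))$ copies, which is within budget, and gives the gap $1-\varepsilon$ versus $\varepsilon$ for the bad pair. The second point that needs careful verification is the claim that the Gelfand--Tsetlin measurement yields contents with exactly the stated conditional expectation, that is, that $\Exp[c_j]=\langle X_j\rangle$ on product states; this follows from the simultaneous diagonalization.
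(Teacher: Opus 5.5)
Your proof is correct, and its core matches the paper's. Both rely on three ingredients: the Young-basis measurement after a Schur transform jointly reads out the Jucys--Murphy eigenvalues $\operatorname{ct}_S(\cdot)$, whose mean on a product state is the fidelity sum over earlier registers; the no case uses the same spectral bound $F_{aj}+F_{bj}\le 1+\sqrt{\epsilon}$; and the conclusion is Hoeffding plus a union bound over the $n$ indices.

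You differ in how the one-sided sum $\sum_{j<i}F_{ij}$ is turned into a sum over all $j\neq i$.
\begin{itemize}
\item The paper measures a forward and a reverse block in each round and uses $Z_i=C_i^{\rightarrow}+C_i^{\leftarrow}$. Its mean is exactly $\sum_{j\neq i}F_{ij}$ and its range is $[-(n-1),n-1]$. This costs two copies per round, but needs no random orderings, no discarded samples, and no averaging over predecessor sets.
\item Your uniformly random ordering with normalization by $p-1$ costs one copy per round and gives a $[-1,1]$-valued estimator of $\mu_i$.
\end{itemize}
Your version needs two small steps written out. First, Hoeffding must be applied conditionally on the positions $p_1,\dots,p_k$ of $i$ only; conditioning on the full orderings would change each mean to the average fidelity over the realized predecessors, not $\mu_i$. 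Second, you need a Chernoff bound showing that a constant fraction of rounds is retained (for $n=2$, only half). A minor notational point: with the product $|\psi_{\pi(1)}\rangle\otimes\cdots\otimes|\psi_{\pi(n)}\rangle$, the position of input $i$ is $\pi^{-1}(i)$ rather than $\pi(i)$. This is harmless, since $\pi^{-1}$ is also uniform.

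The obstacle you flag does not exist, because
\[
\frac{(n-2)(1+\sqrt{\epsilon})+2\epsilon}{2(n-1)}=\frac{1+\sqrt{\epsilon}}{2}-\frac{1+\sqrt{\epsilon}-2\epsilon}{2(n-1)}\le\frac{1+\sqrt{\epsilon}}{2}.
\]
So $\min(\mu_a,\mu_b)\le 1-\epsilon-\gamma(1+\sqrt{\epsilon})$ for every $n\ge 2$; the paper's bound $\Delta\ge n\gamma$ likewise holds for all $n$. The small-$n$ fallback can therefore be dropped.

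If you kept the fallback, its cost would have to be computed with the constant per-pair swap-test bias $(1-2\epsilon)/2$, not $\gamma$. Each state lies in $n-1$ pairs, so the cost is $O(n\log(n/\delta))=O(\gamma^{-1}\log(n/\delta))$ copies per state. Spending $\Theta(\gamma^{-2}\log(n/\delta))$ copies on each pair would exceed the budget.
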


The algorithm uses the Jucys--Murphy operators to aggregate pairwise fidelity information among the input states.
In more detail, the Jucys--Murphy operator $J_i$ for the $i$th state is the sum of all transpositions $(j\,i)$ with $j<i$.
As an observable on the input states, its expectation $\langle J_i\rangle$ equals the sum of the fidelities between state $i$ and the states $j<i$.
Therefore, estimating the expectation of $J_i$ for both the ascending and the descending ordering of the inputs gives the sum of the fidelities between $i$ and all $j\neq i$.
The promise of $\mathrm{QSI}_n^\epsilon$ implies that in a yes instance this sum is at least $(n-1)(1-\epsilon)$ for all $i\in[n]$, whereas in a no instance there must be some $i$ for which the sum is at most $(n-1)(1-\epsilon-\Omega(\gamma))$.

To estimate $\langle J_i\rangle$, we use the Schur transform to measure the states in the Schur--Weyl basis $\{\ket{\lambda,S,T}\}$, where $\lambda$ is a partition of $n$, $S$ is a standard Young tableau of shape $\lambda$, and $T$ is a semistandard Young tableau of shape $\lambda$.
From an outcome $(\lambda,S)$ one obtains an unbiased sample $\mathrm{ct}_S(i)$ of each $\langle J_i\rangle$, since $\langle J_i\rangle = \Exp_{\lambda,S}[\mathrm{ct}_{S}(i)]$, where $\mathrm{ct}_S(i)$ is the content of the box containing $i$ in the standard Young tableau $S$.
Since the $J_i$ commute with one another, the same outcome can be used to compute the content of every $i$ simultaneously.
Therefore, performing the Schur transform on two blocks of input states (one in each ordering) yields $n$ samples $z_1,\ldots,z_n$, where each $z_i$ has expectation equal to the sum of the fidelities between $i$ and all $j\neq i$.
Repeating the process $O(\frac{1}{\gamma^2}\log(n/\delta))$ times allows us to estimate this quantity for each state $i$ to precision $O(\gamma n)$ with probability $1-O(\delta/n)$, which suffices for a final union bound.

In addition to the sample complexity, we also analyze the gate count and circuit depth of these algorithms, summarized in \tabl{qsi-comparison}.

\begin{table}[h!]
\centering
\small
\renewcommand{\arraystretch}{1.3}
\setlength{\tabcolsep}{4pt}

\begin{tabular}{
    >{\raggedright\arraybackslash}p{3.1cm}
    >{\raggedright\arraybackslash}p{1.5cm}
    >{\raggedright\arraybackslash}p{3.0cm}
    >{\raggedright\arraybackslash}p{4.5cm}
    >{\raggedright\arraybackslash}p{3.2cm}
}
\hline
Algorithm
& Setting
& Copies per input state
& Gate count
& Circuit depth \\
\hline

Permutation Test \cite{Kada_2008,buhrman2026permutationtestsquantumstate}
&
Exact
&
1
&
$O(n^2\log d)$ \cite{10.1098/rspa.2022.0733}
&
$O(n^2\log d)$ \cite{10.1098/rspa.2022.0733}
\\

Iterated Swap Tree
\cite{buhrman2026permutationtestsquantumstate}
&
Exact, $n=2^m$
&
1
&
$O(n\log n\log d)$
&
$O(\log n\,\log(n\log d))$
\\

Random matching
(this work)
&
Relaxed
&
$\displaystyle
O\left(
\frac{1}{\gamma^2}
\log\frac{n}{\delta}
\log\frac{2}{\delta}
\right)$
&
$\displaystyle
O\left(
\frac{n\log d}{\gamma^2}
\log\frac{n}{\delta}
\log\frac{2}{\delta}
\right)$
&
$O(1)$
\\

Jucys--Murphy protocol (this work)
&
Relaxed
&
$\displaystyle
O\left(
\frac{1}{\gamma^2}
\log\frac{n}{\delta}
\right)$
&
$\displaystyle
\tilde O\left(
\frac{1}{\gamma^2}
\log\frac{n}{\delta}
\cdot\min\{n^4,nd^4\}
\right)$
&
$\displaystyle
\tilde O\left(
\min\{n^4,nd^4\}
\right)$
\\

\hline
\end{tabular}

\caption{
Comparison of QSI algorithms for $n$ input states of local dimension $d$, where
$\gamma=\frac12-\sqrt{\varepsilon}$. ``Exact'' refers to the promise of \cite{Kada_2008,buhrman2026permutationtestsquantumstate} that every pair of input states is either identical or orthogonal, and ``Relaxed'' to \dfn{two-sided QSI}.
Gate counts and depths are in terms of elementary gates on qubits, in the cost model described below; $\tilde O$ hides polylogarithmic factors in $n$, $d$, and the inverse precision of the Schur transform.
The bounds for our algorithms assume parallel execution on
disjoint copies.
}
\label{tabl:qsi-comparison}
\end{table}
\paragraph{Cost model for \tabl{qsi-comparison}.}
Each $d$-dimensional input state is encoded in $q:=\lceil\log_2 d\rceil$ qubits, and gate counts and depths are measured in one-, two-, and three-qubit gates (in particular Fredkin gates), with ancilla qubits allowed.
The random-matching protocol only uses the outcome statistics of its swap tests, so each swap test can be implemented destructively by measuring the $q$ pairs of corresponding qubits of the two registers in the Bell basis and computing the parity of the number of singlet outcomes (see \sec{swap-test}); this uses $O(q)$ gates on disjoint pairs of qubits and has depth $O(1)$, which gives the gate count $O(nk\ell\log d)$ and the constant depth listed in the table.
The Iterated Swap Tree, in contrast, needs the coherent swap test, since the swap tests at later levels act on the post-measurement states of earlier levels. A coherent swap test between two registers of $q'$ qubits each uses $O(q')$ gates: a Hadamard gate on a control qubit, $q'$ Fredkin gates sharing this control, and a second Hadamard gate.
Since the Fredkin gates share a single control qubit, applying them sequentially gives depth $O(q')$; alternatively, the control qubit can be fanned out to $q'$ ancilla qubits by a tree of CNOT gates of depth $\lceil\log_2 q'\rceil$, after which the Fredkin gates can be applied in parallel and the fan-out uncomputed, giving depth $O(\log q')$.
The Iterated Swap Tree performs $n-1$ such swap tests arranged in $\log_2 n$ levels, where the swap tests at level $l\in\{0,\ldots,\log_2 n-1\}$ act on blocks of $2^l$ registers, i.e., $q'=2^l q$; summing over the levels gives $O(n\log n\log d)$ gates and depth $O(\sum_l \log(2^lq))=O(\log^2 n+\log n\log\log d)=O(\log n\,\log(n\log d))$.
For the Permutation Test, we use the construction of Bradshaw, LaBorde, and Wilde \cite{10.1098/rspa.2022.0733}, which implements the projection onto the symmetric subspace with $O(n^2)$ controlled register swaps, i.e., $O(n^2\log d)$ Fredkin gates; its depth was not analyzed, but it is trivially bounded by the gate count. The alternative implementation through the quantum Fourier transform over the symmetric group has gate count and depth $\tilde O(n^3)$ \cite{bruinsma2026quantumfouriertransformsymmetric}, independently of $d$ up to the cost of the controlled permutation.
For the Jucys--Murphy protocol, the cost is dominated by the Schur transform; see \sec{optimal}.

\subsubsection{Sample complexity lower bound}

We prove a matching lower bound on the sample complexity of this problem.
This lower bound shows that our upper bounds are tight in the regime where the overlap parameter is a constant.

\begin{theorem}[Informal statement of \thm{lower-bound}]
For every $\epsilon\in(0,1/4)$, every constant $c>0$, and every $0<\delta\leq 1/2-c$, any quantum
algorithm that solves $\mathrm{QSI}_{n}^\varepsilon$ with error probability
at most $\delta$ requires
\begin{align}\label{eq:informal-lower-bound}
\Omega\left(
\frac{\log(n/\delta)}{\log(1/\epsilon)}
\right)
\end{align}
copies of each input state.
\end{theorem}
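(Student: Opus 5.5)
We prove the bound by splitting it into two separate lower bounds: $\Omega(\log(1/\delta)/\log(1/\epsilon))$ and $\Omega(\log n/\log(1/\epsilon))$. Since $\log(n/\delta)=\log n+\log(1/\delta)$, the maximum of the two is within a factor $2$ of the claimed bound. Both parts reduce to distinguishing two mixed states built from $k$ copies of each input. We then use the standard Helstrom/fidelity bound: if an algorithm with error $\delta$ distinguishes $\rho_0$ from $\rho_1$, then $\tfrac12\|\rho_0-\rho_1\|_1\ge 1-2\delta$. For the $\log n$ part, only $1-2\delta\ge 2c$ is used, so the argument works for any $\delta\le 1/2-c$.

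\emph{The $\log(1/\delta)$ part.} Take $n=2$, or use copies of a fixed state for the other $n-2$ inputs. The yes instance is $\ket{\psi_1}=\ket{\psi_2}=\ket{0}$. The no instance is $\ket{\psi_1}=\ket{0}$ and $\ket{\psi_2}=\sqrt{1-\epsilon}\ket{0}+\sqrt{\epsilon}\ket{1}$. The no instance is valid because $|\langle\psi_1|\psi_2\rangle|^2=1-\epsilon$... This is wrong, since a no instance needs fidelity at most $\epsilon$. Instead we choose $\ket{\psi_2}=\sqrt{\epsilon}\ket{0}+\sqrt{1-\epsilon}\ket{1}$, which has fidelity exactly $\epsilon$ with $\ket{0}$. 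With $k$ copies, the two global pure states have overlap $|\langle0|\psi_2\rangle|^{2k}=\epsilon^k$. For pure states, the optimal error of any test equals $\tfrac12(1-\sqrt{1-\epsilon^{2k}})\ge \epsilon^{2k}/4$. Requiring this to be at most $\delta$ gives $k\ge \log(1/(4\delta))/(2\log(1/\epsilon))$.

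\emph{The $\log n$ part.} This part uses a hidden bad pair. The yes instance takes all $n$ inputs equal to $\ket{0}$. For the no ensemble, pick $m=\lfloor n/2\rfloor$ disjoint pairs $(2t-1,2t)$, choose a uniformly random $t\in[m]$, and set $\ket{\psi_{2t}}=\ket{\phi}:=\sqrt{\epsilon}\ket0+\sqrt{1-\epsilon}\ket1$, keeping every other input equal to $\ket0$. Each such instance is a valid no instance, so any algorithm with error $\delta$ must distinguish $\rho_0=\proj{0}^{\otimes nk}$ from the mixture $\rho_1=\frac1m\sum_t \proj{\Phi_t}$, where $\ket{\Phi_t}$ replaces block $2t$ by $\ket{\phi}^{\otimes k}$. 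Since $\rho_0$ is pure, the fidelity is
\[
F(\rho_0,\rho_1)=\langle 0^{nk}|\rho_1|0^{nk}\rangle=\frac1m\sum_t|\langle0|\phi\rangle|^{2k}=\epsilon^k,
\]
which is not small enough by itself. We therefore use a chi-square / second-moment bound instead. Write $\ket{\phi}^{\otimes k}=\epsilon^{k/2}\ket{0^k}+\sqrt{1-\epsilon^k}\ket{\perp}$. The components $\ket{\perp}$ placed in different blocks are mutually orthogonal, and they are orthogonal to $\ket{0^{nk}}$. Hence $\rho_1$, restricted to $\mathrm{span}\{\ket{0^{nk}},\ket{\perp_t}\}$, is explicit, and
\[
\tfrac12\|\rho_0-\rho_1\|_1\le 1-\epsilon^k+O\!\left(\tfrac1m\right).
\]
Concretely, the weight of $\rho_1$ on $\ket{0^{nk}}$ is $\epsilon^k$ and the remaining weight $1-\epsilon^k$ is spread over $m$ orthogonal directions, which gives trace distance roughly $1-\epsilon^k$.

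This shows that the symmetric construction gives only $\epsilon^k$-type bounds and not a $\log n$ dependence. That is the main obstacle: a single hidden pair must be made to require $\log n$ copies. The plan is to avoid the fixed reference $\ket0$ and instead draw the common state Haar-randomly, so that locating the odd input becomes a search problem. Let all inputs be $\ket{u}^{\otimes k}$ with $u$ Haar-random in dimension $d$. In the no case, one random input is replaced by a state $\ket{v}$ with $|\langle u|v\rangle|^2=\epsilon$. Averaging over $u$ maps both hypotheses into the symmetric-group representation structure. The next step is to show that the yes state (the symmetric projector, normalized) and the no mixture differ in trace distance by at most roughly $n\cdot\epsilon^{\Theta(k)}$ in the relevant irreps, or equivalently that any test succeeding on a random position must locate that position. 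The locating step can be viewed as identifying $t$ among $n$ possibilities from states with pairwise overlap $\ge\epsilon^{k}$. A Fano/Holevo-type bound should then give $k\log(1/\epsilon)\gtrsim\log n$.

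I expect this last step to be the hardest. It requires either an explicit irrep computation (weights on $\lambda=(nk-1,1)$, etc.) or a reduction from detection to search; the $\log(1/\delta)$ part is routine.
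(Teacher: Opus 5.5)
Your $\log(1/\delta)$ half is fine and is essentially the paper's \thm{delta-bound}: a Helstrom bound on one register whose state has fidelity $\epsilon$ with the reference. There is one slip. With $k$ copies, the squared overlap of the two global states is $\epsilon^k$, so the Helstrom error is $\tfrac12(1-\sqrt{1-\epsilon^k})$, not $\tfrac12(1-\sqrt{1-\epsilon^{2k}})$. Your bound remains valid but loses a factor of $2$.

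The genuine gap is the $\log n$ half: it is not proved, and the route you propose cannot work. The obstruction in your first construction is not that the reference $\ket0$ is fixed. It is that all yes inputs are identical, and this persists when you use a Haar-random $\ket u$. For any such ensemble there is an $n$-independent test. For each copy index $s\in[k]$, run the one-copy permutation test (projection onto the symmetric subspace) on the $s$-th copies of the $n$ inputs. Yes instances pass with certainty. Now take $\ket{\Psi}=\ket u^{\otimes(t-1)}\otimes\ket v\otimes\ket u^{\otimes(n-t)}$ with $|\langle u|v\rangle|^2=\epsilon$. A permutation $\pi$ contributes $\langle\Psi|P(\pi)|\Psi\rangle=1$ if $\pi(t)=t$ and $\epsilon$ otherwise. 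So each test accepts with probability $\tfrac1n+(1-\tfrac1n)\epsilon\le\tfrac{1+\epsilon}{2}$, and all $k$ tests accept with probability at most $\left(\tfrac{1+\epsilon}{2}\right)^k$. Thus $k=O(\log(1/\delta))$ copies suffice regardless of $n$. No trace-distance bound of the kind you hope for can hold for your Haar-random ensemble. A Fano-type ``locate the position'' argument cannot rescue it either, because this test detects the bad input without locating it; detection does not reduce to search. Also, a bound like $n\,\epsilon^{\Theta(k)}$ has the wrong monotonicity: you need something small when $k\ll\log n$, e.g.\ $\sqrt{C^k/n}$.

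The missing idea is to make the yes ensemble use the $1-\epsilon$ slack. Then the yes inputs are themselves random and pairwise non-identical, and the planted input hides among their fluctuations; this also ruins the permutation test's completeness, as in the paper's introduction.
\begin{itemize}
\item The paper takes $\ket{\psi_\varphi}=\cos\alpha\ket0+e^{\i\varphi}\sin\alpha\ket1$ with independent uniform phases and $\alpha=\tfrac12\arccos\sqrt{1-\epsilon}$. Yes pairs then have fidelity at least $\cos^2 2\alpha=1-\epsilon$.
\item It plants $\sin\alpha\ket0+e^{\i\varphi}\cos\alpha\ket1$, which has fidelity at most $\sin^2 2\alpha=\epsilon$ with every other input.
\item Averaging over the phases makes the $m$ copies of each register diagonal in the Dicke basis (\lem{binomial}). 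The distributions are $\Binomial(m,p)$ versus $\Binomial(m,1-p)$, with $p=\sin^2\alpha$ and $p(1-p)=\epsilon/4$. This turns the task into a classical planted-coordinate test.
\item Your second-moment idea works there. The likelihood ratios $L_i=(p/(1-p))^{m-2X_i}$ have mean $1$, are pairwise uncorrelated, and satisfy $\Exp[L_i^2]=\eta^m$ with $\eta=4/\epsilon-3$.
\item Hence the total variation distance is at most $\sqrt{(\eta^m-1)/(4n)}$. Requiring it to be at least $1-2\delta\ge 2c$ gives $m\ge\log(1+16c^2n)/\log(4/\epsilon-3)=\Omega(\log n/\log(1/\epsilon))$.
\end{itemize}
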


For constant $\epsilon$, this lower bound is
$\Omega(\log(n/\delta))$, which matches the sample complexity of the algorithm based on the Schur transform up to a constant factor. When both $\epsilon$ and $\delta$ are constants, the lower bound becomes $\Omega(\log n)$,
and both of our algorithms use $O(\log n)$ copies of each input state.

We establish the lower bound by showing that the following hard instance requires the number of copies in \eq{informal-lower-bound}.
Each input is a random single-qubit state near either the north pole or the south pole of the Bloch sphere, with a uniformly random azimuthal angle and with the polar angle fixed to be $\arccos\sqrt{1-\epsilon}$ away from the pole.
If all of the states are near the north pole, then we obtain a yes instance; if one of them is near the south pole, we obtain a no instance.
Any algorithm solving the problem must in particular distinguish a random yes instance from a random no instance in which the state $\ket{\psi_i}$ is near the south pole for some location $i$.
When $i$ is fixed, this reduces to a state discrimination problem between two fixed pure states whose fidelity is exactly $\epsilon$, which requires $\Omega(\log_{1/\epsilon}(1/\delta))$ copies.

To prove a lower bound that depends on $n$, we consider the case where $\ket{\psi_i}$ is near the south pole for a \emph{random} $i$.
We show that, after averaging over the azimuthal angle, $m$ copies of the state in the $i$th register are described by a binomial distribution $\mathrm{Binomial}(m,p)$ with $m$ trials and success probability $p$ in the Dicke basis.
When $\ket{\psi_i}$ is near the north pole, we have $p=\frac{1-\sqrt{1-\epsilon}}{2}$; otherwise, we have $p=\frac{1+\sqrt{1-\epsilon}}{2}$.
Thus the problem reduces to deciding whether, among $n$ independent binomial random variables, one of them (at a random location) has the complementary parameter.
We then prove that the total variation distance between the two cases of this classical problem is at most $O((4/\epsilon)^{m/2}/\sqrt{n})$.
This implies that distinguishing the two cases with constant success probability requires $m=\Omega(\log_{1/\epsilon}(n))$ copies.

\subsection{Discussion}
The relaxed problem behaves differently from exact quantum state identity. Under the exact promise, the Permutation Test uses one copy of each input state and has perfect completeness. This guarantee can fail sharply under the relaxed promise. Even when every pair has fidelity at least $1-\varepsilon$, the input may have exponentially small overlap with the symmetric subspace. The one-copy measurement
that is optimal in the exact setting is therefore not reliable in the relaxed setting, and our lower bound shows that multiple copies are
necessary.

Our two algorithms approach this issue differently. Our first algorithm uses only parallel swap tests and can be implemented in constant depth.
Its analysis uses the fact that a low-fidelity pair cannot be isolated: it must induce many additional witness pairs that can also be detected. This simpler circuit comes with an additional $\log(2/\delta)$ factor in the sample complexity.
Our second algorithm does not attempt to locate a particular low-fidelity pair. Instead, measurements in the Schur basis, which estimate the Jucys--Murphy observables on the forward and reverse orderings of the inputs, provide an estimator for the total fidelity involving each state.
This gives a sample complexity of $O(\gamma^{-2}\log(n/\delta))$, where $\gamma=\frac12-\sqrt{\varepsilon}$.
For fixed $\varepsilon$, this matches our lower bound up to constant factors. The tradeoff is that the algorithm requires Schur transforms and therefore has a much larger gate complexity.
We believe that this technique for fidelity estimation may be useful in other applications.

One limitation of our algorithms is that they only work in the regime $0<\epsilon<1/4$. This restriction is needed to separate the high-fidelity pairs in a yes instance from the witness pairs guaranteed by our geometric argument.
Since this restriction is only needed for our analysis, we do not believe that it is inherent to the problem itself.
Whether comparable algorithms exist in the full range $\varepsilon<1/2$ remains open. Our lower bound is tight for fixed $\varepsilon$, but it leaves open the optimal dependence on $\varepsilon$, both near the current threshold $1/4$ and in the larger range up to $1/2$.

We have restricted our attention to pure input states. In general, a swap test acting on two states $\rho,\sigma$ estimates $\tr(\rho\sigma)$, which is not equal to the fidelity unless one of the states is pure. Thus algorithms based on applying transpositions to the states do not seem to work immediately.
Extending our algorithms to mixed inputs would therefore seem to require a different idea.

\section{Preliminaries}
Throughout the paper, $\log$ denotes the natural logarithm, and $[n]:=\{1,\ldots,n\}$. In the following subsections, we review the combinatorial and representation-theoretic background needed for our algorithms. For a more detailed introduction to these topics, see \cite{wright2016learn}.

\subsection{Partitions and Young diagrams}

A partition of $n$, denoted by $\lambda\vdash n$, is a tuple of integers $\lambda=(\lambda_1,\ldots, \lambda_d)$ satisfying $\lambda_1 + \cdots+\lambda_d=n$ and $\lambda_1\ge\cdots\ge \lambda_d \ge 0.$ The length $\ell(\lambda)$ is the number of nonzero entries of $\lambda$. Partitions $\lambda$ are usually represented pictorially as \textit{Young diagrams}, which contain $\lambda_r$ boxes in row $r$. A \textit{standard Young tableau (SYT)} $S$ of shape $\lambda$ is a filling of the boxes with the numbers in $[n]$, each appearing exactly once, such that the entries strictly increase from left to right along each row and from top to bottom along each column. We write
$\operatorname{SYT}(\lambda)$ for the set of standard Young tableaux of shape $\lambda$. A \textit{semistandard Young tableau (SSYT)} $T$ of shape $\lambda$ with alphabet $[d]$ is a filling of the boxes with numbers in $[d]$ such that the entries are weakly increasing from left to right along each row and strictly increasing from top to bottom along each column. We illustrate these concepts in \fig{young-tableaux}.

\begin{figure}[h!]
    
    \centering

    \begin{subfigure}[t]{0.31\textwidth}
        \centering
        \ydiagram{4,2,1}
        \caption{The Young diagram of $\lambda=(4,2,1)\vdash 7$.}
        \label{fig:young-diagram}
    \end{subfigure}
    \hfill
    \begin{subfigure}[t]{0.31\textwidth}
        \centering
        \begin{ytableau}
            1 & 2 & 4 & 7 \\
            3 & 5 \\
            6
        \end{ytableau}
        \caption{A standard Young tableau
        of shape $\lambda$.}
        \label{fig:syt}
    \end{subfigure}
    \hfill
    \begin{subfigure}[t]{0.31\textwidth}
        \centering
        \begin{ytableau}
            1 & 1 & 2 & 3 \\
            2 & 3 \\
            3
        \end{ytableau}
        \caption{A semistandard Young tableau
        of shape $\lambda$ with alphabet $[3]$.}
        \label{fig:ssyt}
    \end{subfigure}
    \caption{Examples of a Young diagram and Young tableaux.}
    \label{fig:young-tableaux}
\end{figure}

\begin{definition}[Content]\label{dfn:content}
    For a box in row $r$ and column $c$ of a Young diagram, define its
    content by
    \begin{align}
         \operatorname{ct}(r,c):=c-r.
    \end{align}
   
    For a standard Young tableau $S$, we denote by
    $\operatorname{ct}_S(i)$ the content of the box containing $i$.
\end{definition}

\subsection{Representation theory of symmetric and general linear groups}

The symmetric group $S_n$ is the group of all permutations of $n$
elements. Its irreducible representations are indexed by partitions $\lambda\vdash n$ and are written $(\kappa_\lambda,\mathrm{Sp}_\lambda)$, where $\mathrm{Sp}_\lambda$ is the \textit{Specht module}. We choose a particular basis called \textit{Young's orthogonal basis}, which is an orthonormal basis for $\mathrm{Sp}_\lambda$ consisting of one vector $\ket{S}$ for each standard Young tableau $S$ of shape $\lambda$. In this basis, the matrices $\kappa_\lambda(\pi)$ are real and orthogonal for every $\pi\in S_n$. We write $\dim(\lambda):=\dim(\mathrm{Sp}_\lambda)$ for the dimension of the Specht module, which is equal to the number of standard Young tableaux of shape $\lambda$.

The general linear group $GL(d)$ is the group of invertible
$d\times d$ complex matrices. Its polynomial irreducible representations are indexed by partitions $\lambda\vdash n$ satisfying $\ell(\lambda)\leq d$ and are written $(\nu_\lambda^d,V_\lambda^d)$, where $V_\lambda^d$ is the \textit{Weyl module}. The term ``polynomial'' means that the entries of $\nu_\lambda^d(M)$ are polynomials in the matrix entries of $M$. There is an orthonormal basis for $V_\lambda^d$ called \textit{Gelfand--Tsetlin basis}, whose basis vectors $\ket{T}$ are indexed by semistandard Young tableaux $T$ of shape $\lambda$ with alphabet $[d]$. Restricting $\nu_\lambda^d$ to the unitary group $U(d)$ also gives an irreducible representation.

There is a natural representation $P(\cdot)$ of $S_n$ acting on the space $(\mathbb{C}^d)^{\otimes n}$ as follows: for each $\pi\in S_n$, 
\begin{align}
P(\pi)\ket{i_1,\ldots,i_n}=\ket{i_{\pi^{-1}(1)},\ldots,i_{\pi^{-1}(n)}}.
\end{align}

Similarly, there is also a natural representation $Q^d(\cdot)$ of $GL(d)$ acting on this space as follows: for each $M\in GL(d)$, 
\begin{align}
Q^d(M)\ket{i_1,\ldots,i_n}=(M\ket{i_1})\otimes\cdots\otimes(M\ket{i_n}).
\end{align}

These two representations commute, that is,
$P(\pi)Q^d(M)=Q^d(M)P(\pi)$ for every $\pi\in S_n$ and
$M\in GL(d)$. Their joint action on
$(\mathbb{C}^d)^{\otimes n}$ can be simultaneously block-diagonalized into irreducible representations. This fact is known as \textit{Schur--Weyl duality}, and it states that there exists a unitary $U^d_{\mathrm{Schur}}$, called the \textit{Schur transform}, such that for every $\pi\in S_n$ and $M\in GL(d)$,
\begin{align}
U_{\mathrm{Schur}}^dP(\pi)Q^d(M)(U^d_{\mathrm{Schur}})^\dagger
=\sum_{\substack{\lambda\vdash n\\\ell(\lambda)\leq d}}\ket{\lambda}\bra{\lambda} \otimes
\kappa_\lambda(\pi)\otimes\nu_\lambda^d(M).
\end{align}

Here, we write $U_{\mathrm{Schur}}^d$ to make the dependence on the dimension $d$ explicit. In our algorithm, however, $d$ is determined by the input, so we write $U_{\mathrm{Schur}}$ for simplicity. The Schur transform changes the basis from the standard basis to the \textit{Schur basis}, which consists of the vectors $\ket{\lambda, S, T}$, where $\lambda\vdash n$ is a partition with $\ell(\lambda)\le d$, $S\in\operatorname{SYT}(\lambda)$, and $T$ is a semistandard Young tableau of shape $\lambda$ with alphabet $[d]$. An efficient algorithm for computing the Schur transform was given by Bacon, Chuang, and Harrow \cite{baconQuantumSchurClebschGordan2007}. It is shown in the footnote on page 160 of Harrow's Ph.D. thesis \cite{harrow2005applicationscoherentclassicalcommunication} that it can be computed to accuracy $\xi$ in diamond distance in time $\operatorname{poly}(n, \log(d), \log(1/\xi))$. Pelecanos, Spilecki, and Wright \cite{pelecanos2025debiasedkeylsalgorithmnew} showed that this algorithm can produce Young's orthogonal basis on the symmetric-group register.

\begin{definition}[Jucys--Murphy element]\label{dfn:jucysmurphy}
    For $i\in[n]$, the $i$-th Jucys--Murphy element of the group algebra
    $\mathbb{C}[S_n]$ is defined by 
    \begin{align}
        J_1:=0, \qquad J_i:=\sum_{j<i}(j\,i), \qquad i=2,\ldots,n,
    \end{align}
 
    where $(j\,i)$ denotes the transposition exchanging $j$ and $i$.
\end{definition}

In our quantum setting, we use the same notation $P$ for the linear
extension of the natural permutation representation,
$P:\mathbb{C}[S_n]\longrightarrow
\operatorname{End}((\mathbb{C}^d)^{\otimes n})$.
For a transposition $(j\,i)$, the operator $P_{ij}:=P((j\,i))$ is the SWAP
operator between the $j$-th and $i$-th tensor factors. Therefore, for
every $i\in[n]$, the represented Jucys--Murphy operator is
\begin{align}\label{jucys-fidelity}
    P(J_i)=\sum_{j<i}P_{ij}.
\end{align}

\begin{lemma}[{\cite[Theorem~2.36]{pelecanos2025debiasedkeylsalgorithmnew}}]
\label{lem:jucys-murphy-eigenvalue}
Let $\lambda\vdash n$. For every $i\in[n]$, Young’s orthogonal basis of $\mathrm{Sp}_\lambda$ is an eigenbasis of $\kappa_\lambda(J_i)$, where $\kappa_\lambda(J_i)$ is an operator on $\mathrm{Sp}_\lambda$. More precisely, for every $S\in\operatorname{SYT}(\lambda)$,
\begin{align}
    \kappa_\lambda(J_i)\ket{S}=\operatorname{ct}_S(i)\ket{S}.
\end{align}
\end{lemma}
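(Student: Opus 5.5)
We argue by induction on $i$, using the branching rule for $S_{i-1}\subset S_i$ together with the fact that $J_i$ is the difference of two central elements. Write $Z_i:=\sum_{1\le j<k\le i}(j\,k)$ for the sum of all transpositions in $S_i\subseteq S_n$, with $Z_1:=0$. Then $Z_i$ is central in $\mathbb{C}[S_i]$, and directly from \dfn{jucysmurphy}, $J_i=Z_i-Z_{i-1}$.

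\textbf{Step 1: the branching structure of Young's orthogonal basis.} Young's orthogonal basis is adapted to the chain $S_1\subset S_2\subset\cdots\subset S_n$. For $S\in\operatorname{SYT}(\lambda)$, let $\mu^{(i)}$ be the shape of the boxes of $S$ containing $1,\ldots,i$. Then $\ket{S}$ lies in the copy of $\mathrm{Sp}_{\mu^{(i)}}$ inside $\mathrm{Res}^{S_n}_{S_i}\mathrm{Sp}_\lambda$ picked out by the path $\mu^{(1)}\subset\cdots\subset\mu^{(i)}$. This is the multiplicity-free branching rule, and we take it as the standard defining property of the Young (Gelfand--Tsetlin) basis.

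\textbf{Step 2: apply Schur's lemma.} Since $Z_i$ is central in $\mathbb{C}[S_i]$, by Schur's lemma it acts on the irreducible $\mathrm{Sp}_{\mu^{(i)}}$ as a scalar $c(\mu^{(i)})$. By Step 1, $\ket{S}$ lies in such an irreducible for every $i$, so $\kappa_\lambda(Z_i)\ket{S}=c(\mu^{(i)})\ket{S}$. Subtracting the corresponding identity for $Z_{i-1}$ gives
\begin{align}
\kappa_\lambda(J_i)\ket{S}=\bigl(c(\mu^{(i)})-c(\mu^{(i-1)})\bigr)\ket{S}.
\end{align}

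\textbf{Step 3: identify the scalar.} We need to show
\begin{align}
c(\mu)=\sum_{\text{boxes }(r,c)\in\mu}(c-r).
\end{align}
Granting this, $c(\mu^{(i)})-c(\mu^{(i-1)})$ is the content of the single box added at step $i$, which is the box containing $i$. That is $\operatorname{ct}_S(i)$, as claimed.

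This identity is the main obstacle. We would prove it with the classical character formula
\begin{align}
c(\mu)=\binom{m}{2}\frac{\chi_\mu((1\,2))}{\dim\mu}=\sum_r\left[\binom{\mu_r}{2}-\binom{\mu'_r}{2}\right],
\end{align}
where $m=|\mu|$ and $\mu'$ is the conjugate partition. The first equality holds because $Z_i$ is a sum of $\binom{m}{2}$ transpositions, all conjugate, so taking the trace on $\mathrm{Sp}_\mu$ gives $c(\mu)\dim\mu=\binom{m}{2}\chi_\mu((1\,2))$. The second equality is Frobenius's formula for the character value at a transposition. The right-hand side then equals the sum of contents: summing $c-1$ over the boxes of each row gives $\binom{\mu_r}{2}$, and summing $r-1$ over the boxes of each column gives $\binom{\mu'_r}{2}$.

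If Frobenius's formula is not to be cited, an alternative is to verify directly on Young's orthogonal form. There $\kappa_\lambda((k\,k{+}1))\ket{S}$ has diagonal coefficient $1/(\operatorname{ct}_S(k+1)-\operatorname{ct}_S(k))$. One then inducts using $J_{k+1}=(k\,k{+}1)J_k(k\,k{+}1)+(k\,k{+}1)$, and the recursion for the diagonal entries closes up.
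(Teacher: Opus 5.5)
Your proof is correct, and it does something the paper does not: the paper gives no argument for this lemma and imports it from Pelecanos--Spilecki--Wright (Theorem~2.36). The paper also only names Young's orthogonal basis (one orthonormal vector per standard tableau, real orthogonal matrices) without fixing it by any defining property, so whatever makes the lemma true is inherited from the cited source.

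You instead give Jucys's classical argument. You write $J_i=Z_i-Z_{i-1}$ as a difference of transposition class sums. The Gelfand--Tsetlin adaptation of the basis plus Schur's lemma then gives $Z_i\ket{S}=c(\mu^{(i)})\ket{S}$. Frobenius's formula identifies $c(\mu)$ with the content sum, so the contents appear as telescoping increments. This makes explicit the one structural property of the basis actually used, namely adaptation to $S_1\subset\cdots\subset S_n$. It also explains why the eigenvalue depends only on the box containing $i$. The citation buys brevity and avoids committing to a foundational definition of the basis.

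Three points to tighten.

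First, Frobenius's formula $\binom{m}{2}\chi_\mu((1\,2))/\dim\mu=\sum_{(r,c)\in\mu}(c-r)$ must come from an independent proof, e.g.\ Frobenius's character formula. In the Okounkov--Vershik development, a common modern route to this lemma, the content-sum eigenvalue of the class sum is itself deduced from the lemma via $Z_n=J_1+\cdots+J_n$. Citing that version would be circular.

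Second, your alternative via Young's orthogonal form needs the full $2\times 2$ block, not just the diagonal coefficient. Set $s_k:=(k\,k{+}1)$ and $\rho:=\operatorname{ct}_S(k+1)-\operatorname{ct}_S(k)$. When $s_kS$ is standard, $\kappa_\lambda(s_k)$ acts on $\mathrm{span}\{\ket{S},\ket{s_kS}\}$ by $\begin{pmatrix}\rho^{-1}&\sqrt{1-\rho^{-2}}\\ \sqrt{1-\rho^{-2}}&-\rho^{-1}\end{pmatrix}$. A direct computation gives $\kappa_\lambda(s_k)\,\mathrm{diag}(\operatorname{ct}_S(k),\operatorname{ct}_S(k+1))\,\kappa_\lambda(s_k)+\kappa_\lambda(s_k)=\mathrm{diag}(\operatorname{ct}_S(k+1),\operatorname{ct}_S(k))$. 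What makes $\ket{S}$ an eigenvector of $J_{k+1}$ is the vanishing of the off-diagonal entries, which a recursion on diagonal entries alone does not show. The case where $s_kS$ is not standard ($\rho=\pm1$, a one-dimensional block) is immediate. With this filled in, the alternative is the most elementary route when the explicit orthogonal form is taken as the definition of the basis.

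Third, a minor imprecision in Step 1. Since $\mathrm{Res}^{S_n}_{S_i}\mathrm{Sp}_\lambda$ is not multiplicity-free, the copy of $\mathrm{Sp}_{\mu^{(i)}}$ containing $\ket{S}$ is singled out by the upper part of the path, $\mu^{(i)}\subset\cdots\subset\lambda$. The lower part $\mu^{(1)}\subset\cdots\subset\mu^{(i)}$ only locates $\ket{S}$ inside that copy. Schur's lemma only needs $\ket{S}$ to lie in the $\mu^{(i)}$-isotypic component, so the argument is unaffected.
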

Consequently, Young's orthogonal basis is a common eigenbasis of the
operators $\kappa_\lambda(J_1),\ldots,\kappa_\lambda(J_n)$.

We use the same notation $J_i$ for the group-algebra element and its
represented operator whenever the representation is clear from context.

\section{A constant-depth algorithm for quantum state identity}

In this section, we construct a quantum algorithm that uses only swap tests in parallel, together with classical pre- and post-processing, to solve the quantum state identity problem.
Compared to the lower bound we establish in \sec{lower-bound}, in the constant-threshold regime it incurs an extra multiplicative factor $O\left(\log(2/\delta)\right)$ from an amplification using independent random matchings.
To explain the ideas behind our algorithm, in \sec{swap-test} we consider the simplest case $n=2$; in this case, the two input states are either close to or far from each other.
A single swap test then distinguishes the two cases with a bias of $\Omega(\gamma)$ in its acceptance probability, where $\gamma=\frac12-\sqrt{\epsilon}$.
To amplify this bias, we run $k$ swap tests in parallel, which we call a $k$-fold swap test; for $k=O\left(\frac{1}{\gamma^2}\log(n/\delta)\right)$, the test decides a pair correctly with probability $1-O(\delta/n)$.

To extend this result to general $n\geq 3$, in \sec{random-matching} we run the $k$-fold swap test on a random matching that pairs the $n$ registers into $n/2$ pairs (if $n$ is odd, we first reduce to the even case by adding one extra copy of one of the input states as an additional register).
For a no instance, it is only guaranteed that there is an unknown pair of states that are far from each other in fidelity.
Thus, when sampling a random matching, we only obtain an edge whose endpoints are sufficiently far apart with constant probability.
To amplify the probability that such a pair is caught by a swap test, our algorithm samples $\ell=O\left(\log(2/\delta)\right)$ random matchings, so that two registers with small overlap are paired with probability $1-O(\delta)$.
On the other hand, for a yes instance, no matter how the perfect matchings are chosen, each pair must have fidelity at least $1-\epsilon$.
Since the number $\ell$ of random matchings is small, the probability that the $k$-fold swap test rejects any of the pairs can be bounded by $O(\delta)$.

Before presenting our randomized protocol for solving
$\mathrm{QSI}_{n}^\epsilon$ (\dfn{two-sided QSI}), we first establish
a geometric property of no instances. This property allows us to exploit
the existence of a low-fidelity pair in the subsequent random-matching
analysis.

\begin{lemma}[Geometric witness lemma]
\label{lem:geometric-witness}
Suppose that there exist distinct indices $a,b\in[n]$ such that
$F_{ab}:=|\langle\psi_a|\psi_b\rangle|^2\le \epsilon$.
Then, for every $r\in[n]\setminus\{a,b\}$,
\begin{align}
\min\{F_{ar},F_{br}\}
\le
\frac{1+\sqrt{\epsilon}}{2}.
\end{align}
\end{lemma}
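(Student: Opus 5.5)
We will use the same two-dimensional reduction as in the introduction's example. Note that $\max_{\psi}\bigl(|\braket{\psi}{\psi_a}|^2+|\braket{\psi}{\psi_b}|^2\bigr)$ equals the largest eigenvalue of $|\psi_a\rangle\langle\psi_a|+|\psi_b\rangle\langle\psi_b|$. That eigenvalue is $1+|\braket{\psi_a}{\psi_b}|=1+\sqrt{F_{ab}}$. Hence, for every $r\in[n]\setminus\{a,b\}$,
\begin{align}
F_{ar}+F_{br}=\bra{\psi_r}\bigl(|\psi_a\rangle\langle\psi_a|+|\psi_b\rangle\langle\psi_b|\bigr)\ket{\psi_r}\le 1+\sqrt{F_{ab}}\le 1+\sqrt{\epsilon}.
\end{align}
A minimum is at most an average, so $\min\{F_{ar},F_{br}\}\le \frac12(F_{ar}+F_{br})\le\frac{1+\sqrt\epsilon}{2}$, which is the claim.

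The steps, in order, are as follows.

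\textbf{(i) Rewrite the sum.} Write $F_{ar}+F_{br}$ as the expectation of the positive operator $M:=|\psi_a\rangle\langle\psi_a|+|\psi_b\rangle\langle\psi_b|$ in the unit vector $\ket{\psi_r}$. This expectation is bounded by the largest eigenvalue $\|M\|$.

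\textbf{(ii) Compute the eigenvalues of $M$.} $M$ has rank at most two and is supported on $\mathrm{span}\{\ket{\psi_a},\ket{\psi_b}\}$. Its nonzero eigenvalues coincide with those of the $2\times2$ Gram matrix
\begin{align}
G=\begin{pmatrix}1&\braket{\psi_a}{\psi_b}\\ \braket{\psi_b}{\psi_a}&1\end{pmatrix}.
\end{align}
This holds because $M=AA^\dagger$ and $G=A^\dagger A$ for $A=[\ket{\psi_a}\ \ket{\psi_b}]$. The eigenvalues of $G$ are $1\pm|\braket{\psi_a}{\psi_b}|$.

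\textbf{(iii) Apply monotonicity.} Since $|\braket{\psi_a}{\psi_b}|=\sqrt{F_{ab}}\le\sqrt\epsilon$, we get $\|M\|\le 1+\sqrt\epsilon$. The bound on the minimum then follows by averaging.

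No step is a real obstacle. The only point needing care is step (ii): the eigenvalue identification through $AA^\dagger$ versus $A^\dagger A$, which is exactly the fact already quoted in the introduction. It must also cover the degenerate case $\ket{\psi_a}\propto\ket{\psi_b}$. That case can occur only when $\epsilon\ge1$, so it is excluded here, but the Gram-matrix argument handles it anyway.

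An alternative, if one prefers to avoid spectral language, is a Bloch-sphere or angle argument. Using the triangle inequality for the Fubini–Study angle $\theta_{xy}=\arccos|\braket{\psi_x}{\psi_y}|$, one has $\theta_{ar}+\theta_{br}\ge\theta_{ab}$. Maximizing $\min\{\cos^2\theta_{ar},\cos^2\theta_{br}\}$ then gives $\cos^2(\theta_{ab}/2)=\frac{1+\cos\theta_{ab}}{2}$, which is again at most $\frac{1+\sqrt\epsilon}{2}$. This bound is slightly stronger for the minimum, but the eigenvalue route is shorter and reuses an identity already stated in the paper, so I would present that one.
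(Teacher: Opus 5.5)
Your proof is correct, but your main argument differs from the paper's proof of this lemma; the paper's proof is essentially your ``alternative''.

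The paper sets $\theta_{ij}=\arccos|\langle\psi_i|\psi_j\rangle|$ and uses the triangle inequality $\theta_{ab}\le\theta_{ar}+\theta_{rb}$. From this it concludes that the larger of $\theta_{ar},\theta_{rb}$ is at least $\theta_{ab}/2$. It then bounds the fidelity of that edge by $\cos^2(\theta_{ab}/2)=\frac{1+\sqrt{F_{ab}}}{2}$.

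Your spectral route instead proves the sum inequality $F_{ar}+F_{br}\le 1+\sqrt{F_{ab}}$ and then averages. This has two advantages. It is self-contained linear algebra, with no appeal to the metric property of the Bures/Fubini--Study angle, which is standard but is itself a nontrivial fact. And the sum bound is exactly the inequality the paper later re-derives in \eq{spectral-pair-bound} for the Jucys--Murphy soundness analysis, so one statement could serve both sections. The paper's angle argument gives a more geometric picture: a state cannot be close to two far-apart states. It works directly with the minimum, which is all the random-matching analysis needs.

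One correction to your closing remark: the angle argument is not stronger for the minimum. Both routes give exactly $\min\{F_{ar},F_{br}\}\le\frac{1+\sqrt{F_{ab}}}{2}$, because $\cos^2(\theta_{ab}/2)=\frac{1+\cos\theta_{ab}}{2}$ with $\cos\theta_{ab}=\sqrt{F_{ab}}$. If anything, the spectral route yields more, since it also controls the sum $F_{ar}+F_{br}$.
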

\begin{proof}
    For pure states, let $\theta_{ij}=\arccos|\langle\psi_i|\psi_j\rangle|\in[0,\pi/2]$ be the Bures angle, which is a metric on pure states.
    By the triangle inequality,
    we have $\theta_{ab}\leq \theta_{ar} + \theta_{rb}$.
    This means that $\max\{\theta_{ar}, \theta_{rb}\}\geq \theta_{ab}/2$, i.e., at least one of $\theta_{ar}$ and $\theta_{rb}$ is at least $\theta_{ab}/2$.
    Without loss of generality, assume that $\theta_{ar}\geq \theta_{rb}$, which implies that $\theta_{ar}\geq\theta_{ab}/2$.
    Since $\cos\theta_{ab}=\sqrt{F_{ab}}\leq\sqrt{\epsilon}$ and $\cos$ is monotonically decreasing on $[0,\pi/2]$, we have
    \begin{align}
        F_{ar}=\cos^2\theta_{ar}\leq\cos^2(\theta_{ab}/2)=\frac{1+\cos\theta_{ab}}{2}\leq\frac{1+\sqrt\epsilon}{2}.
    \end{align}
    Moreover, $\theta_{ar}\geq\theta_{rb}$ implies $F_{ar}\leq F_{br}$, so $F_{ar}=\min\{F_{ar},F_{br}\}$, which completes the proof.
\end{proof}

\lem{geometric-witness} shows that a single bad pair cannot be
geometrically isolated from all of the remaining states. In particular,
if $(a,b)$ satisfies $F_{ab}\le\epsilon$, then for every
$r\in[n]\setminus\{a,b\}$, at least one of the two edges $(a,r)$ and
$(b,r)$ has fidelity at most
$\frac{1+\sqrt{\epsilon}}{2}$.
We refer to such additional pairs as \emph{witness edges}. Thus, although
the promise of a no instance only guarantees the existence of one bad
pair, the existence of this pair necessarily induces a collection of
witness edges incident to $a$ or $b$. This geometric structure allows
the random-matching protocol to search for a witness edge rather than
requiring it to sample the original bad pair directly.

In the following subsections, we first construct a $k$-fold swap test
that distinguishes a high-fidelity pair from a witness pair with high
probability. We then apply this pairwise test to independently sampled
random perfect matchings and show that the geometric structure above
allows a no instance to be detected with sufficiently large probability.
Repeating the procedure over
$\ell=O(\log(2/\delta))$ independent matchings amplifies the overall
success probability to at least $1-\delta$.

\subsection{A \texorpdfstring{$k$}{k}-fold swap test for \texorpdfstring{$n=2$}{n=2}}\label{sec:swap-test}

We first consider the case $n=2$: given copies of $\ket{\psi}$ and $\ket{\phi}$, determine whether $|\braket{\psi}{\phi}|^2\geq 1-\epsilon$ or $|\braket{\psi}{\phi}|^2\leq \frac{1+\sqrt\epsilon}{2}$.
This is a harder problem than $\mathrm{QSI}_2^\epsilon$, since the threshold for the no instance is larger than $\epsilon$; we use this stronger guarantee so that the same test can
later be applied to the witness edges guaranteed by
\lem{geometric-witness}.
For this problem to make sense, we require
$1-\epsilon > \frac{1+\sqrt\epsilon}{2}$,
which is equivalent to $\epsilon< 1/4$.
This is the source of the restriction $\epsilon<1/4$ in our results.

For this special case, we use the $k$-fold swap test, which takes $k$ copies of each state and runs a swap test on each pair of copies (see the full description in \alg{k-fold-swap-test}).
The $k$-fold swap test accepts if at least $k\tau$ of the swap tests accept, where $\tau=\frac{7-2\epsilon+\sqrt{\epsilon}}{8}$ is the threshold, and rejects otherwise.

Recall that the standard swap test on $\ket{\psi}\ket{\phi}$ applies a Hadamard gate to a control qubit, a controlled-SWAP of the two registers, and a second Hadamard gate, and then measures the control qubit; the outcome $0$, which we call the \emph{symmetric outcome}, occurs with probability $\frac12+\frac12|\braket{\psi}{\phi}|^2$.
Our analysis only uses this outcome distribution, so any implementation that reproduces it can be used. 
In particular, we implement the swap tests \emph{destructively}, without a control qubit \cite{garciaescartin2013swap,cincio2018learning}: writing each $d$-dimensional register as $q=\lceil\log_2 d\rceil$ qubits, we measure each of the $q$ pairs of corresponding qubits of the two registers in the Bell basis (a CNOT gate followed by a Hadamard gate on the first qubit of the pair, followed by a computational-basis measurement of both qubits), and declare the symmetric outcome if and only if the number of pairs that yield the singlet $\ket{\Psi^-}=\frac{1}{\sqrt2}(\ket{01}-\ket{10})$ (i.e., the measurement outcome $11$) is even.
This reproduces the outcome distribution of the standard swap test: since the two-qubit SWAP operator equals $I-2\proj{\Psi^-}$, the swap operator on the two registers equals $\bigotimes_{t=1}^{q}(I-2\Pi^-_t)$, where $\Pi^-_t$ is the singlet projector on the $t$-th pair of qubits, and expanding this product shows that the projector $\frac{1}{2}(I+\mathrm{SWAP})$ onto the symmetric subspace of the two registers equals the sum, over all even-size subsets $E\subseteq[q]$, of $\bigotimes_{t\in E}\Pi^-_t\otimes\bigotimes_{t\notin E}(I-\Pi^-_t)$. The Bell-basis measurements measure the commuting projectors $\Pi^-_1,\ldots,\Pi^-_q$, so the probability of an even number of singlet outcomes is $\bra{\psi}\!\bra{\phi}\frac{1}{2}(I+\mathrm{SWAP})\ket{\psi}\!\ket{\phi}=\frac12+\frac12|\braket{\psi}{\phi}|^2$.
The destructive swap test uses $O(\log d)$ gates, all of which act on disjoint pairs of qubits, so it has constant depth for every local dimension $d$.
\begin{boxedalgorithm}[H]
    \caption{$k$-fold SWAP test.}
    \label{alg:k-fold-swap-test}
    
    \vspace{0.5em}
    \textbf{Input:}
    Two unknown pure states $\ket{\psi_i}$ and $\ket{\psi_j}$,
    $k$ copies of each state, and a threshold $\tau$.
    \\
    
    \textbf{Output:}
    \texttt{accept} or \texttt{reject}.
    \\
    
    \textbf{Procedure:}
    \begin{enumerate}
        
        \item Perform $k$ independent (destructive) SWAP tests using fresh copies of
        $\ket{\psi_i}$ and $\ket{\psi_j}$.
        
        \item Let $X_{ij}^{(s)}\in\{0,1\}$ indicate whether the $s$-th
        SWAP test produces the symmetric outcome, and define
        \begin{align}
            \widehat p_{ij}
            :=
            \frac{1}{k}\sum_{s=1}^k X_{ij}^{(s)}.
        \end{align}
        
        \item If $\widehat p_{ij}<\tau$, return \texttt{reject};
        otherwise, return \texttt{accept}.
    
    \end{enumerate}

\end{boxedalgorithm}

\begin{lemma}\label{lem:k-fold-swap}
    For $\epsilon\in(0,1/4)$ and $\gamma=1/2-\sqrt\epsilon$, the $k$-fold swap test with threshold $\tau=\frac{7-2\epsilon+\sqrt{\epsilon}}{8}$ decides whether $|\braket{\psi}{\phi}|^2\geq 1-\epsilon$ or $|\braket{\psi}{\phi}|^2\leq \frac{1+\sqrt\epsilon}{2}$ correctly with probability at least $1-e^{-k\gamma^2/8}$. In particular, it solves $\mathrm{QSI}_2^\epsilon$ with error at most $e^{-k\gamma^2/8}$.
\end{lemma}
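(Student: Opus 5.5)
The plan is a direct Hoeffding argument. Each destructive swap test produces the symmetric outcome with probability $p=\frac12+\frac12 F$, where $F=|\braket{\psi}{\phi}|^2$, and since the $k$ tests use fresh copies, the indicators $X^{(1)},\ldots,X^{(k)}$ are i.i.d.\ Bernoulli$(p)$. I will compute the acceptance probability in the two cases and check that the threshold $\tau$ sits in the middle.

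In the yes case, $F\ge 1-\epsilon$, so
\begin{align}
p\ \ge\ p_{\mathrm{yes}}:=1-\frac{\epsilon}{2}.
\end{align}
In the no case, $F\le\frac{1+\sqrt\epsilon}{2}$, so
\begin{align}
p\ \le\ p_{\mathrm{no}}:=\frac{3+\sqrt\epsilon}{4}.
\end{align}
A short calculation shows that $\tau=\frac{7-2\epsilon+\sqrt\epsilon}{8}$ is exactly the midpoint $\frac{p_{\mathrm{yes}}+p_{\mathrm{no}}}{2}$. The gap between the two means is
\begin{align}
p_{\mathrm{yes}}-p_{\mathrm{no}}=\frac{1-2\epsilon-\sqrt\epsilon}{4}=\frac{(1-2\sqrt\epsilon)(1+\sqrt\epsilon)}{4}=\frac{\gamma(1+\sqrt\epsilon)}{2}\ \ge\ \frac{\gamma}{2}.
\end{align}
Hence in either case the true mean is at distance at least $\gamma/4$ from $\tau$, on the correct side. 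Here $\gamma>0$ precisely because $\epsilon<1/4$.

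Next apply Hoeffding's inequality. In the yes case,
\begin{align}
\Pr[\widehat p<\tau]\ \le\ \Pr[\widehat p-p<-\gamma/4]\ \le\ e^{-2k(\gamma/4)^2}=e^{-k\gamma^2/8}.
\end{align}
In the no case, $\Pr[\widehat p\ge\tau]\le\Pr[\widehat p-p\ge\gamma/4]\le e^{-k\gamma^2/8}$ by the same bound. So in both cases the decision is correct with probability at least $1-e^{-k\gamma^2/8}$.

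For the ``in particular'' part: a yes instance of $\mathrm{QSI}_2^\epsilon$ is exactly the yes case above. A no instance has $F\le\epsilon$, and since $\epsilon\le\frac{1+\sqrt\epsilon}{2}$ for $\epsilon<1/4$, it falls into the no case. The error bound therefore transfers directly.

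The only step needing care is the algebra confirming that $\tau$ is the midpoint and that the gap is at least $\gamma/2$; the factorization $1-2\epsilon-\sqrt\epsilon=(1-2\sqrt\epsilon)(1+\sqrt\epsilon)$ settles it. The monotonicity of $p$ in $F$ is what justifies reducing each case to its worst-case endpoint.
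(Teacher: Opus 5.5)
Your proposal is correct and follows the paper's proof essentially step for step. You use the same endpoints $p_{\mathrm{yes}}=1-\epsilon/2$ and $p_{\mathrm{no}}=\frac{3+\sqrt\epsilon}{4}$, the same midpoint threshold, and the same two-sided Hoeffding bound. Your factorization $1-2\epsilon-\sqrt\epsilon=(1-2\sqrt\epsilon)(1+\sqrt\epsilon)$ is just another way of writing the paper's $\Delta=\gamma(3-2\gamma)/4\ge\gamma/2$.
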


\begin{proof}
Let $X_i$ be the indicator that the $i$-th SWAP test outputs the symmetric outcome, and let $X=\sum_{i=1}^k X_i$. Since a SWAP test accepts with probability $\frac{1}{2}+\frac{1}{2}|\langle\psi|\phi\rangle|^2$, let
\begin{align}
    p_{\mathrm{yes}}
    &=1-\frac{\epsilon}{2} \quad \text{and} \quad
    p_{\mathrm{no}}
    =\frac{3+\sqrt{\epsilon}}{4}.
\end{align}
For a yes instance, each swap test accepts with probability at least $p_\mathrm{yes}$, and for a no instance, each swap test accepts with probability at most $p_\mathrm{no}$.
The threshold $\tau$ is the midpoint of $p_{\mathrm{yes}}$ and $p_{\mathrm{no}}$; define
\begin{align}
    \Delta
    :=p_{\mathrm{yes}}-p_{\mathrm{no}}
    =\frac{1-2\epsilon-\sqrt{\epsilon}}{4}.
\end{align}
Thus, $p_{\mathrm{yes}}-\tau=\tau-p_{\mathrm{no}}=\Delta/2$.

For a yes instance, the test fails only if $X/k<\tau$. Since $\mathbf{E}[X]/k\geq p_{\mathrm{yes}}$, Hoeffding's inequality gives
\begin{align}
    \Pr[\mathrm{reject}\mid\mathrm{yes}]
    &\leq
    \Pr\left[
        \frac{1}{k}(X-\mathbf{E}[X])
        \leq-\frac{\Delta}{2}
    \right]
    \leq e^{-k\Delta^2/2}.
\end{align}
Similarly, for a no instance, $\mathbf{E}[X]/k\leq p_{\mathrm{no}}$, and applying the upper-tail bound gives
\begin{align}
    \Pr[\mathrm{accept}\mid\mathrm{no}]
    \leq e^{-k\Delta^2/2}.
\end{align}
Hence, the error probability in either case is at most $e^{-k\Delta^2/2}$.

Since $\gamma=1/2-\sqrt{\epsilon}$, we have $\sqrt{\epsilon}=1/2-\gamma$ and $\epsilon=(1/2-\gamma)^2$. Therefore,
\begin{align}
    \Delta
    &=\frac{1-2(1/2-\gamma)^2-(1/2-\gamma)}{4}=\frac{\gamma(3-2\gamma)}{4}.
\end{align}
Since $\epsilon\in(0,1/4)$, we have $\gamma\in(0,1/2)$, and thus $3-2\gamma\geq2$. Hence,
\(
    \Delta\geq\frac{\gamma}{2},
\)
and it follows that
\(
    e^{-k\Delta^2/2}
    \leq e^{-k\gamma^2/8}.
\)
Therefore, the $k$-fold swap test succeeds with probability at least $1-e^{-k\gamma^2/8}$.
\end{proof}

In particular, for $n=2$, the $k$-fold swap test with $k=\left\lceil\frac{8}{\gamma^2}\log(2/\delta)\right\rceil$ solves $\mathrm{QSI}_2^\epsilon$ with error at most $\delta$ using $k$ copies of each input state, which establishes \thm{informal-random-matching} in this case. For $n\ge 3$ we use the random-matching protocol of the next subsection.

\subsection{Extension by random matching}\label{sec:random-matching}

For $n\geq 3$, we sample a random matching and test each pair in the matching using the $k$-fold swap test (see the description in \alg{random-matching}).
\begin{boxedalgorithm}[H]
    \caption{Random-Matching Protocol for the relaxed quantum state
    identity problem for $n\ge 3$.}
    \label{alg:random-matching}
    
    \vspace{0.5em}
    \textbf{Input:}
    Unknown pure quantum states
    $\ket{\psi_1},\ldots,\ket{\psi_n}$, where $n\ge 3$ (if $n$ is odd, first append an additional register holding an extra copy of $\ket{\psi_1}$, so that $n$ becomes even);
    a tolerance parameter $\epsilon\in(0,\frac{1}{4})$;
    a target error probability $\delta\in(0,1)$.
    \\
    
    \textbf{Output:}
    \texttt{accept} or \texttt{reject} with success probability at least $1-\delta$.
    \\
    
    \textbf{Procedure:}
    \begin{enumerate}

        \item Set the decision threshold
            \(\tau:=\frac{7-2\epsilon+\sqrt{\epsilon}}{8}\).

        \item Set $\ell = \lceil\log_3(2/\delta)\rceil$, so that the random matchings fail to detect a witness edge with probability at most $\delta/2$.

        \item Set $\gamma=\frac12-\sqrt\epsilon$ and $k=\left\lceil\frac{8}{\gamma^2}\cdot\log(n\ell/\delta)\right\rceil$.
        \item Independently sample $\ell$ uniformly random perfect matchings $M_1,\ldots,M_\ell$ of $[n]$. 
        
        \begin{enumerate}
            \item For each round $r\in[\ell]$ and each edge $\{i,j\}\in M_r$, run \alg{k-fold-swap-test} on $\ket{\psi_i}$ and $\ket{\psi_j}$ with threshold $\tau$; each edge consumes $k$ copies of each of the two states.\label{itm:a}

            \item If any call in Step~\ref{itm:a} returns \texttt{reject}, return \texttt{reject}.

        \end{enumerate}
        \item Return \texttt{accept}.
     
    \end{enumerate}
    
\end{boxedalgorithm}

\begin{lemma}\label{lem:random-matching}
    Let $n\geq 4$ be even, and let $A,B$ be two non-empty sets that form a partition of $[n]$.
    Let $i\in A$ and $j\in B$ be two fixed elements.
    When $\ell$ independent uniformly random perfect matchings of $[n]$ are sampled, the probability that in none of them $i$ is matched to an element of $B$ or $j$ is matched to an element of $A$ is at most $(1/3)^\ell$.
\end{lemma}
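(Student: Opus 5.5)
The plan is to prove the bound for a single uniformly random perfect matching and then use independence. Call a matching \emph{bad} if $i$ is matched inside $A$ and $j$ is matched inside $B$; this is exactly the event that neither $i$ is matched to an element of $B$ nor $j$ to an element of $A$. Since $M_1,\ldots,M_\ell$ are independent, the probability that all $\ell$ matchings are bad is the $\ell$-th power of the single-matching probability. So it suffices to show that $\Pr[M \text{ is bad}]\le 1/3$ for one uniform perfect matching $M$ of $[n]$.

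\textbf{Step 1: the partner of $i$.} Write $a:=|A|$ and $b:=|B|$, so $a+b=n$. In a uniform perfect matching, the partner of $i$ is uniform over the other $n-1$ elements. Hence $i$ is matched inside $A$ with probability $\frac{a-1}{n-1}$.

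\textbf{Step 2: the partner of $j$, conditionally.} Condition on $i$ being matched to some $i'\in A$. Since $j\in B$, the pair $\{i,i'\}$ does not involve $j$. The rest of $M$ is then a uniform perfect matching of the remaining $n-2$ elements. So $j$'s partner is uniform over the $n-3$ remaining elements other than $j$. None of the $b-1$ elements of $B\setminus\{j\}$ has been used, so the conditional probability that $j$ is matched inside $B$ is $\frac{b-1}{n-3}$. This is well defined because $n\ge4$. Combining the two steps,
\begin{align}
\Pr[M\text{ is bad}]=\frac{(a-1)(b-1)}{(n-1)(n-3)}.
\end{align}

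\textbf{Step 3: the elementary bound.} This is the only real step, and it is mild. By AM--GM with $(a-1)+(b-1)=n-2$, we have $(a-1)(b-1)\le (n-2)^2/4$. Also $(n-1)(n-3)=(n-2)^2-1$. Therefore
\begin{align}
\Pr[M\text{ is bad}]\le \frac14\cdot\frac{(n-2)^2}{(n-2)^2-1}.
\end{align}
The right-hand side is decreasing in $n$. At $n=4$ it equals $\frac14\cdot\frac43=\frac13$, so the bound $\le 1/3$ holds for every even $n\ge4$. Equality occurs at $n=4$ with $a=b=2$, so the constant $1/3$ is tight. The degenerate cases $a=1$ or $b=1$ give probability $0$ and are covered automatically.

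Raising the single-matching bound to the $\ell$-th power gives $(1/3)^\ell$, which proves the lemma.
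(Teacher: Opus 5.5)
Your proof is correct and matches the paper's argument step for step: the product formula $\frac{(a-1)(b-1)}{(n-1)(n-3)}$, AM--GM with $(a-1)+(b-1)=n-2$, the bound $\frac14\cdot\frac{(n-2)^2}{(n-2)^2-1}\le\frac13$ for $n\ge 4$, and independence across the $\ell$ matchings. Your justification of the conditional step and your remark that the bound is tight at $n=4$, $a=b=2$ are correct additions.
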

\begin{proof}
    Let $|A|=a$ and $|B|=b$.
    In a uniformly random perfect matching, $i$ is matched inside $A$ with probability $\frac{a-1}{n-1}$. Conditioned on that event, $j$ is matched inside $B$ with probability $\frac{b-1}{n-3}$.
    Therefore, the probability that in a single random matching $i$ is not matched to $B$ and $j$ is not matched to $A$ is
    \begin{align}
        \mathrm{Pr}[\mathrm{unmatched}] = \frac{(a-1)(b-1)}{(n-1)(n-3)} \leq
        \frac{\frac{(a-1+b-1)^2}{4}}{(n-1)(n-3)}= \frac{(n-2)^2}{4(n-1)(n-3)} =\frac{1}{4}\cdot\frac{1}{1-\frac{1}{(n-2)^2}} \le \frac{1}{3},
    \end{align}
    where the first inequality is the AM--GM inequality and the last one uses $n\ge 4$.
    The lemma follows since the $\ell$ matchings are independent.
\end{proof}

If $n$ is odd, we first append one additional register containing an extra copy of $\ket{\psi_1}$, and run the protocol on the resulting $n+1$ registers; this doubles the number of copies of $\ket{\psi_1}$ that are consumed. Both promises are preserved: in a yes instance, the new register has fidelity $1$ with $\ket{\psi_1}$ and fidelity at least $1-\epsilon$ with every other state, and in a no instance the bad pair is still present. In the rest of this section we therefore assume that $n$ is even.

Let $(i,j)$ be a bad pair of a no instance, i.e., $F_{ij}\le\epsilon$. By \lem{geometric-witness}, every other register is far from $\ket{\psi_i}$ or from $\ket{\psi_j}$.
Thus we can partition the index set $[n]$ into two sets $S_i$ and $S_j$, where $S_j$ consists of the registers that are far from $\ket{\psi_i}$ (in particular $j\in S_j$) and $S_i$ consists of the registers that are far from $\ket{\psi_j}$ (in particular $i\in S_i$), with ties broken arbitrarily.
Then \lem{random-matching} implies that, when $\ell$ independent random matchings are sampled, with probability at least $1-3^{-\ell}$ at least one edge connects an index in $\{i,j\}$ to an index whose corresponding state has fidelity at most $\frac{1+\sqrt\epsilon}{2}$ with it.
On the other hand, for a yes instance, it is guaranteed that all edges in the matchings have fidelity at least $1-\epsilon$.

\paragraph{Soundness.}
When the algorithm takes as input a no instance, it is guaranteed that there is a pair of distinct indices $i,j$ such that $|\langle\psi_i|\psi_j\rangle|^2\leq \epsilon$.
Our algorithm samples $\ell=O\left(\log(2/\delta)\right)$ random matchings, so that with probability at least $1-\delta/2$, either $i$ or $j$ is paired with an index whose corresponding state has fidelity at most $\eta=\frac{1+\sqrt\epsilon}{2}$ with it.
By \lem{k-fold-swap}, for $\epsilon=(1/2-\gamma)^2$, running the $k$-fold swap test on each edge of the matchings with $k=O\left(\frac{1}{\gamma^2}\log(2/\delta)\right)$ copies ensures that such a pair, whose fidelity is at most $\eta$, is accepted with probability at most $\delta/2$.
Then by a union bound, the overall soundness error is at most $\delta$, and $k\ell = O\left(\frac{1}{\gamma^2}\log^2(2/\delta)\right)$ copies of each state suffice for soundness.

\begin{theorem}[Soundness]\label{thm:randommatching-soundness}
    For $\epsilon< 1/4$, \alg{random-matching} rejects any no instance with probability at least $1-\delta$. Moreover, for soundness alone, the choices $\ell=\lceil\log_3(2/\delta)\rceil$ and $k=\left\lceil\frac{8}{\gamma^2}\log(2/\delta)\right\rceil$ already suffice, i.e., $O\left(\frac{1}{(1-2\sqrt\epsilon)^2}\log^2(2/\delta)\right)$ copies of each input state (at most $2k\ell$ copies of $\ket{\psi_1}$ and $k\ell$ copies of every other state when $n$ is odd).
\end{theorem}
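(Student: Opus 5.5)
We would prove \thm{randommatching-soundness} by splitting the rejection event into two parts and taking a union bound. Fix a no instance with a bad pair $(i,j)$, $F_{ij}\le\epsilon$. If $n$ is odd, we first use the reduction described before the theorem: appending a copy of $\ket{\psi_1}$ keeps the bad pair, so we may assume $n$ is even. The case $n=2$ is handled directly by \lem{k-fold-swap}: the only matching is $\{1,2\}$, which is the bad pair, so it is rejected except with probability at most $e^{-k\gamma^2/8}\le\delta/2$. Hence we assume $n\ge 4$ even.

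\emph{Step 1 (witness structure).} We define the partition $S_i\ni i$, $S_j\ni j$ of $[n]$ as in the text. For each $r\notin\{i,j\}$, \lem{geometric-witness} gives $\min\{F_{ir},F_{jr}\}\le\eta:=\frac{1+\sqrt\epsilon}{2}$. We put $r$ in $S_j$ if $F_{ir}\le\eta$, and otherwise in $S_i$, in which case $F_{jr}\le\eta$. We also place $j\in S_j$ and $i\in S_i$; since $F_{ij}\le\epsilon\le\eta$, this is consistent. The conclusion is that every $r\in S_j$ satisfies $F_{ir}\le\eta$ and every $r\in S_i$ satisfies $F_{jr}\le\eta$. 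Both sets are nonempty.

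\emph{Step 2 (matchings hit a witness edge).} We apply \lem{random-matching} with $A=S_i$ and $B=S_j$. Except with probability at most $3^{-\ell}\le\delta/2$ (since $\ell=\lceil\log_3(2/\delta)\rceil$), some sampled matching contains an edge $\{i,r\}$ with $r\in S_j$ or an edge $\{j,r\}$ with $r\in S_i$. By Step 1, this edge has fidelity at most $\eta$. Call this event $G$. The event $G$ depends only on the classical randomness of the matchings.

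\emph{Step 3 (the swap test catches it).} Condition on the matchings and on $G$, and fix one witness edge $e$. Each edge's $k$-fold swap test uses fresh copies, so by \lem{k-fold-swap} the test on $e$ accepts with probability at most $e^{-k\gamma^2/8}$. With $k\ge\frac{8}{\gamma^2}\log(2/\delta)$, this is at most $\delta/2$. Acceptance of the whole protocol requires every test to accept, in particular the one on $e$. Therefore
\[
\Pr[\mathtt{accept}]\le\Pr[\neg G]+\Pr[e\text{ accepted}\mid G]\le\delta.
\]
The $k$ in \alg{random-matching} uses $\log(n\ell/\delta)\ge\log(2/\delta)$, because $n\ell\ge2$; so the algorithm as specified also satisfies this bound. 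The copy count follows because each state appears in exactly one edge per matching, so it consumes $k$ copies per round and $k\ell$ in total. The duplicated $\ket{\psi_1}$ consumes at most $2k\ell$. Moreover $k\ell=O(\gamma^{-2}\log^2(2/\delta))$ with $\gamma=\frac{1-2\sqrt\epsilon}{2}$.

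The main subtlety is Step 1: the partition must be built so that witness edges always go from $i$ into $S_j$ and from $j$ into $S_i$, and so that ties and the pair $(i,j)$ itself are handled correctly. Only then does \lem{random-matching}'s event translate exactly into the existence of a low-fidelity edge. The remaining steps are bookkeeping, plus the observation that conditioning on the classical matchings leaves the swap-test outcomes with the distribution required by \lem{k-fold-swap}.
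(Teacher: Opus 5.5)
Your proposal is correct and follows essentially the same route as the paper: build the partition $(S_i,S_j)$ from \lem{geometric-witness}, use \lem{random-matching} to find a witness edge except with probability $3^{-\ell}\le\delta/2$, use \lem{k-fold-swap} to show that edge is rejected except with probability $\delta/2$, and finish with a union bound. Your additional checks (the $n=2$ case, which the algorithm's $n\ge 3$ hypothesis makes unnecessary, and the verification that $\log(n\ell/\delta)\ge\log(2/\delta)$ so the algorithm's $k$ suffices) are harmless details that the paper leaves implicit.
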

\begin{proof}
    Given a no instance, there exists a pair of indices $i<j$ such that $\ket{\psi_i}$ and $\ket{\psi_j}$ have fidelity at most $\epsilon$.
    Then by \lem{geometric-witness}, for every other index $s\in[n]\setminus\{i,j\}$, at least one of the pairs $(i,s)$ and $(j,s)$ has fidelity at most $\eta=\frac{1+\sqrt{\epsilon}}{2}$.
    Let $S_j$ be the set of indices $s\in[n]\setminus\{i\}$ such that the fidelity of $(i,s)$ is at most $\eta$, and let $S_i$ be the set of indices $s\in[n]\setminus\{j\}$ such that the fidelity of $(j,s)$ is at most $\eta$ and $s\notin S_j$. Then $j\in S_j$, $i\in S_i$, and $(S_i,S_j)$ is a partition of $[n]$.

    By \lem{random-matching}, when $\ell$ random matchings are sampled, with probability at least $1-(1/3)^\ell$, either $i$ is matched to an index in $S_j$ or $j$ is matched to an index in $S_i$.
    For $\ell\ge\log_{3}(2/\delta)$, this probability is at least $1-\delta/2$.
    Conditioned on that event, some sampled edge $(u,v) \in \left(\{i\}\times S_j\right) \cup \left(\{j\}\times S_i\right)$ satisfies $|\langle\psi_u|\psi_v\rangle|^2\leq \frac{1+\sqrt\epsilon}{2}$.
    By \lem{k-fold-swap}, the $k$-fold swap test with threshold $\tau=\frac{7-2\epsilon+\sqrt\epsilon}{8}$ accepts the pair $(u,v)$ with probability at most $e^{-k\gamma^2/8}$, where $\gamma=1/2-\sqrt\epsilon$.
    For $k\ge\frac{8}{\gamma^2}\cdot\log(2/\delta)$, this probability is at most $\delta/2$.
    Finally, by a union bound, the soundness error is at most $\delta$. Since the parameters chosen in \alg{random-matching} satisfy both conditions, the theorem follows.
\end{proof}

\paragraph{Completeness.}
For the completeness analysis, it suffices to show that, using sufficiently many copies, the algorithm accepts every edge in the random matchings.
\begin{theorem}[Completeness]\label{thm:randommatching-completeness}
    For $\epsilon< 1/4$, \alg{random-matching} accepts any yes instance with probability at least $1-\delta$, using $k\ell=O\left(\frac{1}{(1-2\sqrt\epsilon)^2}\log(2/\delta)\cdot\log(n/\delta)\right)$ copies of each input state (at most $2k\ell$ copies of $\ket{\psi_1}$ when $n$ is odd).
\end{theorem}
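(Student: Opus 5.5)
The plan is a union bound over all swap-test calls. In a yes instance every pair $(u,v)$ satisfies $|\langle\psi_u|\psi_v\rangle|^2\ge 1-\epsilon$, whatever matching contains it. This also covers the padded register when $n$ is odd: the extra copy of $\ket{\psi_1}$ has fidelity $1$ with $\ket{\psi_1}$ and fidelity at least $1-\epsilon$ with every other state. So every edge $\{i,j\}$ of every sampled matching $M_r$ is a yes pair for the two-state problem of \lem{k-fold-swap}, regardless of the randomness in the matchings. By \lem{k-fold-swap}, the $k$-fold swap test with threshold $\tau=\frac{7-2\epsilon+\sqrt\epsilon}{8}$ rejects such a pair with probability at most $e^{-k\gamma^2/8}$.

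Next I count the calls. Each of the $\ell$ matchings has $n/2$ edges, where $n$ is the padded even count, at most the original $n+1$. The algorithm therefore makes at most $\ell(n+1)/2\le n\ell$ calls to \alg{k-fold-swap-test}, and it rejects only if some call rejects. Since each call uses fresh copies, the per-call bound holds for every call, conditioned on the matchings. A union bound gives
\begin{align}
\Pr[\mathrm{reject}\mid\mathrm{yes}]\le n\ell\, e^{-k\gamma^2/8}\le n\ell\cdot\frac{\delta}{n\ell}=\delta,
\end{align}
using $k=\left\lceil\frac{8}{\gamma^2}\log(n\ell/\delta)\right\rceil$. No independence between calls is needed.

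Finally I account for resources. Each index lies in exactly one edge per matching and uses $k$ copies there, so it consumes $k\ell$ copies in total; when $n$ is odd, $\ket{\psi_1}$ feeds two registers and uses at most $2k\ell$. For the asymptotic form, note $\ell=\lceil\log_3(2/\delta)\rceil=O(\log(2/\delta))$. Also $\log(n\ell/\delta)=\log(n/\delta)+\log\ell$, and $\log\ell=O(\log\log(2/\delta))=O(\log(n/\delta))$ because $n\ge 3$. Hence $k=O(\gamma^{-2}\log(n/\delta))$. Since $\gamma=\frac12-\sqrt\epsilon=\frac{1-2\sqrt\epsilon}{2}$, this gives $k\ell=O\!\left(\frac{1}{(1-2\sqrt\epsilon)^2}\log(2/\delta)\log(n/\delta)\right)$.

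The argument is essentially routine. The only step needing care is the bookkeeping: the number of calls must be bounded by $n\ell$ even after padding, and the extra $\log\ell$ inside $k$ must be absorbed into $O(\log(n/\delta))$. I expect that absorption to be the main, though minor, obstacle.
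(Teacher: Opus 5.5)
Your proof is correct and follows the paper's argument: every edge of every sampled matching is a yes pair, so \lem{k-fold-swap} bounds each call's rejection probability by $e^{-k\gamma^2/8}$, a union bound over the edges with $k\ge\frac{8}{\gamma^2}\log(n\ell/\delta)$ finishes, and $\log\ell$ is absorbed into $O(\log(n/\delta))$. Your count of at most $n\ell$ calls covers the padded odd case and gives error $\delta$ rather than the paper's $\delta/2$, which is still enough for the statement.
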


\begin{proof}
    When the algorithm takes as input a yes instance, it is guaranteed that every pair of states has fidelity at least $1-\epsilon$.
Thus, in the $\ell$ sampled random matchings, each edge is accepted by the $k$-fold swap test with probability at least $1-e^{-k\gamma^2/8}$, where $\gamma=\frac12-\sqrt{\epsilon}$ as in the soundness analysis.
Since there are $\ell n/2$ edges in total, by a union bound the error probability is at most $\frac{\ell n}{2}\cdot e^{-k\gamma^2/8}$.
Therefore, with $\ell=\lceil\log_3(2/\delta)\rceil$ as required by the soundness analysis and $k\ge\frac{8}{\gamma^2}\log(n\ell/\delta)$ as chosen in \alg{random-matching}, the completeness error is at most $\delta/2$. Since $\log(n\ell/\delta)=\log(n/\delta)+\log\ell=O(\log(n/\delta))$, we have $k=O\left(\frac{1}{\gamma^2}\log(n/\delta)\right)$, which gives the claimed number of copies.
\end{proof}

\paragraph{Complexity of \alg{random-matching}.}
Regarding sample complexity, \alg{random-matching} solves $\mathrm{QSI}_{n}^\epsilon$ with probability $1-\delta$ using $O\left(\frac{1}{\gamma^2}\log(n/\delta)\log(2/\delta)\right)$ copies of each input state. 
As we will see in \sec{lower-bound}, this sample complexity is optimal when $\delta$ and $\epsilon$ are constants.
However, for $\delta=o(1)$, compared with our lower bound in \sec{lower-bound}, the sample complexity incurs an additional multiplicative $\log(2/\delta)$ factor.

Regarding gate complexity and circuit depth, the algorithm runs $\frac{n}{2}\cdot k\ell=O(\frac{n}{\gamma^2}\log(n/\delta)\log(2/\delta))$ swap tests in parallel, each on two registers of $\lceil\log_2 d\rceil$ qubits. Each destructive swap test (\sec{swap-test}) uses $O(\log d)$ gates acting on disjoint pairs of qubits, followed by measurements and classical post-processing. 
Therefore, the gate complexity is $O(\frac{n\log d}{\gamma^2}\log(n/\delta)\log(2/\delta))$ and the depth is $O(1)$, independently of $n$, $d$, $\epsilon$, and $\delta$.

\section{A sample-optimal algorithm}\label{sec:optimal}
In this section we present an algorithm whose sample complexity is optimal when the tolerance parameter $\epsilon$ is a constant. Given $n$ quantum states $\ket{\psi_1}, \ldots,\ket{\psi_n}$, let $\rho = \ket{\psi_1\cdots \psi_n}\bra{\psi_1\cdots \psi_n}$. The algorithm applies the Schur transform to $\rho$ and measures the symmetric-group register, as described in \alg{jm-protocol}.

\begin{boxedalgorithm}[H]
\caption{Jucys--Murphy protocol for the relaxed quantum state
identity problem.}
\label{alg:jm-protocol}

\vspace{0.5em}
\textbf{Input:}
Unknown pure quantum states
$\ket{\psi_1},\ldots,\ket{\psi_n}$;
a tolerance parameter $\epsilon\in(0,\frac14)$;
a target error probability $\delta\in(0,1)$.
\\

\textbf{Output:}
\texttt{accept} or \texttt{reject} with success probability at least
$1-\delta$.
\\

\textbf{Procedure:}
\begin{enumerate}

    \item Set the decision threshold $\tau :=\frac{3n-4}{4}-\frac{n-2}{2}\epsilon+\frac{n-2}{4}\sqrt{\epsilon}.$

    \item Set $\gamma = \frac{1}{2} - \sqrt{\epsilon}$ and $L:=\left\lceil\frac{8}{\gamma^2}\log\frac{2n}{\delta}\right\rceil.$ Each Schur transform below is implemented by a circuit whose channel is within diamond-norm distance $\xi:=\frac{\delta}{4L}$ of the exact Schur transform.

    \item For each round $r\in[L]$, use two fresh copies of every input
    state and perform the following:

    \begin{enumerate}

        \item On the forward-ordered block
        $\ket{\psi_1}\otimes\cdots\otimes\ket{\psi_n}$,
        apply the Schur transform and measure the symmetric-group register in Young's orthogonal basis, obtaining
        $(\lambda_r,S_r)$.
        For every $i\in[n]$, set
        $C_{i,r}^{\rightarrow}:=\operatorname{ct}_{S_r}(i)$, where $\operatorname{ct}_{S_r}(i)$ is the content of the box containing $i$ (\dfn{content}).

        \item On the reverse-ordered block
        $\ket{\psi_n}\otimes\cdots\otimes\ket{\psi_1}$,
        apply the Schur transform and measure the symmetric-group register in Young's orthogonal basis, obtaining
        $(\lambda_r',S'_r)$.
        For every $i\in[n]$, set
        $C_{i,r}^{\leftarrow}:=\operatorname{ct}_{S'_r}(n+1-i)$.

        \item For every $i\in[n]$, set
        $Z_{i,r}:=C_{i,r}^{\rightarrow}+C_{i,r}^{\leftarrow}$.

    \end{enumerate}

    \item For every $i\in[n]$, compute
    \begin{align}
    \bar{Z}_i
    :=
    \frac{1}{L}
    \sum_{r=1}^{L} Z_{i,r}.
    \end{align}

    \item If $\min_{i\in[n]}\bar{Z}_i<\tau$, return \texttt{reject}. Otherwise, return \texttt{accept}.

\end{enumerate}

\end{boxedalgorithm}

After applying the Schur transform, we measure the Young diagram
label $\lambda$ and the standard Young tableau label $S$, where $\ket{S}$ belongs to Young's orthogonal basis of the symmetric-group register, while leaving the
unitary register unmeasured.  The corresponding projectors on the
original tensor-product space are
\begin{align}
M_{\lambda,S}=U_{\mathrm{Schur}}^\dagger\left(\ket{\lambda}\!\bra{\lambda}\otimes\ket{S}\!\bra{S}\otimes I_{\operatorname{dim}(V_\lambda^d)}\right)U_{\mathrm{Schur}}.
\end{align}
Equivalently, this measurement is obtained by measuring the full Schur
basis $\ket{\lambda,S,T}$ and marginalizing over the unitary-register
label $T$.
The operators $\{M_{\lambda,S}\}_{\lambda,S}$ form a projective
measurement, since the corresponding Schur-basis projectors are
mutually orthogonal and sum to the identity, and these properties are
preserved under unitary conjugation.

We next exploit the structure of the \textit{Jucys--Murphy elements}
in \dfn{jucysmurphy}. By \lem{jucys-murphy-eigenvalue}, under the
Schur--Weyl decomposition,
\begin{align}
    U_{\mathrm{Schur}}P(J_i)U_{\mathrm{Schur}}^\dagger=\sum_{\substack{\lambda\vdash n\\ \ell(\lambda)\le d}} \ket{\lambda}\!\bra{\lambda}\otimes\left(\sum_{S\in\operatorname{SYT}(\lambda)}\operatorname{ct}_S(i)\ket{S}\!\bra{S}\right)\otimes
    I_{\operatorname{dim}(V_\lambda^d)}.
\end{align}
Hence, the eigenvalue of $P(J_i)$ is determined by the tableau label $S$
and is independent of the unitary-register label $T$.

\begin{corollary}[Joint spectral measurement of the Jucys--Murphy operators]
\label{cor:jm-joint-measurement}
For every $i\in[n]$,
\begin{align}\label{Jucys-murph-opeartor}
P(J_i)
=
\sum_{\substack{\lambda\vdash n\\ \ell(\lambda)\le d}}
\sum_{S\in\operatorname{SYT}(\lambda)}
\operatorname{ct}_S(i)\,
M_{\lambda,S}.
\end{align}
Therefore, the projective measurement
$\{M_{\lambda,S}\}_{\lambda,S}$ jointly resolves the eigenvalues of
the commuting Jucys--Murphy operators. In particular, an outcome
$(\lambda,S)$ determines the eigenvalue $\operatorname{ct}_S(i)$ of
each $J_i$.
\end{corollary}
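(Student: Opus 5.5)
The plan is to conjugate the block-diagonal form of $P(J_i)$ displayed just before the corollary back to the original tensor-product space. The projectors $M_{\lambda,S}$ are exactly the pullbacks of the Schur-basis projectors, so the identity should follow directly.

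First, I will start from Schur--Weyl duality. Extending $U_{\mathrm{Schur}}P(\pi)U_{\mathrm{Schur}}^\dagger=\sum_\lambda \ket{\lambda}\!\bra{\lambda}\otimes\kappa_\lambda(\pi)\otimes I$ linearly to the group algebra (taking $M=I$) gives
\begin{align}
U_{\mathrm{Schur}}P(J_i)U_{\mathrm{Schur}}^\dagger=\sum_{\lambda\vdash n,\ \ell(\lambda)\le d}\ket{\lambda}\!\bra{\lambda}\otimes\kappa_\lambda(J_i)\otimes I_{\dim(V_\lambda^d)}.
\end{align}
Second, I will apply \lem{jucys-murphy-eigenvalue}. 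Since Young's orthogonal basis is orthonormal and diagonalizes $\kappa_\lambda(J_i)$ with eigenvalues $\operatorname{ct}_S(i)$, this yields the spectral decomposition $\kappa_\lambda(J_i)=\sum_{S\in\operatorname{SYT}(\lambda)}\operatorname{ct}_S(i)\ket{S}\!\bra{S}$. Substituting it recovers the displayed formula. Third, I will conjugate by $U_{\mathrm{Schur}}^\dagger$ on the left and $U_{\mathrm{Schur}}$ on the right, and distribute over the double sum. Each summand becomes $\operatorname{ct}_S(i)\,M_{\lambda,S}$ by the definition of $M_{\lambda,S}$, which gives \eqref{Jucys-murph-opeartor}.

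For the interpretive part, I will use that $\{M_{\lambda,S}\}$ is a projective measurement, as noted above: the projectors are mutually orthogonal and sum to the identity. The same family of projectors appears in the expansion of $P(J_i)$ for every $i$, so all the $P(J_i)$ are simultaneously diagonal with respect to this measurement. This also re-derives that they commute. On outcome $(\lambda,S)$ the post-measurement state lies in the range of $M_{\lambda,S}$, and there every $P(J_i)$ acts as the scalar $\operatorname{ct}_S(i)$. One consequence I will note for later use is $\operatorname{tr}(P(J_i)\rho)=\Exp_{(\lambda,S)}[\operatorname{ct}_S(i)]$ when the outcomes are drawn with probabilities $\operatorname{tr}(M_{\lambda,S}\rho)$.

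There is no real obstacle here. The only subtle point is that \lem{jucys-murphy-eigenvalue} requires the symmetric-group register to be expressed in Young's orthogonal basis, i.e., the Schur transform must be the variant producing this basis. That is ensured by the cited result of Pelecanos, Spilecki, and Wright.
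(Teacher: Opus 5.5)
Your proposal is correct and follows essentially the same route as the paper: extend Schur--Weyl duality linearly (with $M=I$) to $P(J_i)$, use \lem{jucys-murphy-eigenvalue} to write $\kappa_\lambda(J_i)=\sum_{S}\operatorname{ct}_S(i)\ket{S}\!\bra{S}$ in Young's orthogonal basis, and conjugate back by $U_{\mathrm{Schur}}$ to identify each summand with $\operatorname{ct}_S(i)\,M_{\lambda,S}$. The interpretive consequences then follow from the mutual orthogonality and completeness of the $M_{\lambda,S}$, as you describe.
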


The above corollary shows that a single outcome $(\lambda,S)$ of the
joint spectral measurement determines the eigenvalues of all the
Jucys--Murphy operators through the tableau contents
$\operatorname{ct}_S(i)$. We now relate these measured contents to the
pairwise fidelities of the input states. Since each $J_i$ is a sum of
SWAP operators, its expectation value can be expressed as a sum of
pairwise fidelities. By applying the measurement to both the forward
and reverse orderings of the input states, we obtain a bounded random
variable whose expectation is the sum of all fidelities involving a
given state.

\begin{lemma}[Fidelity-sum estimator]\label{lem:contentexpectation}
Given $n$ pure quantum states
$\ket{\psi_1},\ldots,\ket{\psi_n}$, for every $i\in[n]$ there exists a
random variable $Z_i$ obtained from two joint spectral measurements of
the Jucys--Murphy operators such that   
\begin{align}
\mathbb{E}[Z_i]=\sum_{j\neq i} F_{ij},
\end{align}
where $F_{ij}=|\braket{\psi_i}{\psi_j}|^2$. Moreover, $-(n-1)\le Z_i\le n-1.$
\end{lemma}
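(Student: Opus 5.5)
We take $Z_i:=C_i^{\rightarrow}+C_i^{\leftarrow}$ exactly as in \alg{jm-protocol}. Here $C_i^{\rightarrow}=\operatorname{ct}_S(i)$ is read off the outcome $(\lambda,S)$ of the measurement $\{M_{\lambda,S}\}$ applied to $\rho=\ket{\psi_1\cdots\psi_n}\bra{\psi_1\cdots\psi_n}$. Similarly, $C_i^{\leftarrow}=\operatorname{ct}_{S'}(n+1-i)$ comes from the same measurement applied to the reversed product $\rho'=\ket{\psi_n\cdots\psi_1}\bra{\psi_n\cdots\psi_1}$. The plan is to compute each expectation through \cor{jm-joint-measurement}, and then bound the range using the geometry of Young diagrams.

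\emph{Expectation.} By \cor{jm-joint-measurement},
\begin{align}
\mathbb{E}[C_i^{\rightarrow}]=\sum_{\lambda,S}\operatorname{ct}_S(i)\tr(M_{\lambda,S}\rho)=\tr(P(J_i)\rho)=\sum_{j<i}\tr(P_{ij}\rho).
\end{align}
For a product of pure states, the swap of tensor factors $j$ and $i$ has expectation
\begin{align}
\bra{\psi_j}\bra{\psi_i}\mathrm{SWAP}\ket{\psi_j}\ket{\psi_i}=|\braket{\psi_i}{\psi_j}|^2=F_{ij}.
\end{align}
Hence $\mathbb{E}[C_i^{\rightarrow}]=\sum_{j<i}F_{ij}$.

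In the reversed block, state $\psi_i$ sits in position $n+1-i$, and positions $k<n+1-i$ hold the states $\psi_j$ with $j>i$. The same computation with $J_{n+1-i}$ therefore gives $\mathbb{E}[C_i^{\leftarrow}]=\sum_{j>i}F_{ij}$. Adding the two expectations yields $\sum_{j\ne i}F_{ij}$. The two measurements act on independent fresh copies, but independence is not even needed for linearity of expectation.

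\emph{Range.} We need a content bound: for any $S\in\operatorname{SYT}(\lambda)$, $\lambda\vdash n$, and position $k$, we claim $-(k-1)\le\operatorname{ct}_S(k)\le k-1$. The entries $1,\dots,k$ of $S$ form a standard tableau of a sub-diagram $\mu\vdash k$, and box $k$ is a corner of $\mu$. A box in row $r$ and column $c$ of a diagram with $k$ boxes contains the whole rectangle $[r]\times[c]$, so $rc\le k$. This gives $c-r\le c-1\le k-1$ and, symmetrically, $c-r\ge-(r-1)\ge-(k-1)$.

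Applying this bound with $k=i$ gives $|C_i^{\rightarrow}|\le i-1$. Applying it with $k=n+1-i$ gives $|C_i^{\leftarrow}|\le n-i$. Summing, $|Z_i|\le n-1$.

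The only mildly delicate step is the index bookkeeping in the reversed ordering, namely matching position $n+1-i$ and the swaps $P_{k,n+1-i}$ with the pairs $(i,j)$ for $j>i$. A second point to keep in mind is that \lem{jucys-murphy-eigenvalue} is stated for the exact Schur transform. The lemma concerns this ideal measurement, so the approximation error $\xi$ is deferred to the later analysis.
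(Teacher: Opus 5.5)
Your proof is correct and follows essentially the same route as the paper. You define $Z_i=C_i^{\rightarrow}+C_i^{\leftarrow}$, compute each expectation via \cor{jm-joint-measurement} and the SWAP identity with the same reversed-index bookkeeping, and bound the contents using the fact that the box containing $k$ lies in the first $k$ rows and columns; your rectangle argument ($rc\le k$) simply spells out the fact the paper states in one line.
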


\begin{proof}
First consider the input states in the forward order $\ket{\psi_1}\otimes\cdots\otimes\ket{\psi_n}.$
Let $(\lambda,S)$ be the outcome of the joint spectral measurement
$\{M_{\lambda,S}\}_{\lambda,S}$, and define $C_i^\rightarrow:=\operatorname{ct}_S(i).$ The probability of obtaining $(\lambda,S)$ is $p(\lambda,S)=\tr[\rho M_{\lambda,S}]$. By \cor{jm-joint-measurement},
\begin{align}
    \mathbb{E}[C_i^\rightarrow]
    &=\sum_{\lambda,S}\operatorname{ct}_S(i)\tr[\rho M_{\lambda,S}]\\
    &= \tr[\rho\sum_{\lambda,S}\operatorname{ct}_S(i)M_{\lambda,S}] \\
    &=\tr(J_i\rho). 
\end{align}
Using \dfn{jucysmurphy} and $\tr[(\ket{\psi_i}\!\bra{\psi_i}\otimes\ket{\psi_j}\!\bra{\psi_j})\,\mathrm{SWAP}]=|\braket{\psi_i}{\psi_j}|^2$,
\begin{align}
    \mathbb{E}[C_i^\rightarrow]
    &=
    \sum_{j<i}\tr[\rho P_{ij}]\\
    &=
    \sum_{j<i}F_{ij}.
\end{align}

Next consider a second block with the input states in reverse order, $\ket{\psi_n}\otimes\cdots\otimes\ket{\psi_1}.$
Let $S'$ denote the tableau obtained from the corresponding joint
spectral measurement, and define $C_i^\leftarrow:=\operatorname{ct}_{S'}(n+1-i).$ Applying the same argument to the reversed ordering gives
\begin{align}
    \mathbb{E}[C_i^\leftarrow]
    =
    \sum_{j>i}F_{ij}.
\end{align}

Finally, define $Z_i:=C_i^\rightarrow+C_i^\leftarrow.$ Therefore,
\begin{align}
\mathbb{E}[Z_i]
&=
\mathbb{E}[C_i^\rightarrow]
+
\mathbb{E}[C_i^\leftarrow]\\
&=
\sum_{j<i}F_{ij}
+
\sum_{j>i}F_{ij}\\
&=
\sum_{j\neq i}F_{ij}.
\end{align}

Since the box containing $i$ in a standard Young tableau lies in the first $i$ rows and the first $i$ columns, we have $-(i-1)\le C_i^\rightarrow\le i-1$; similarly, $-(n-i)\le C_i^\leftarrow\le n-i$. Hence
\begin{align}
-(n-1)\le Z_i\le n-1.
\end{align}

\end{proof}
\lem{contentexpectation} shows that the Jucys--Murphy measurement
provides a bounded estimator for the total fidelity involving each
input state. In particular, in each round $r\in[L]$, \alg{jm-protocol} obtains a random variable $Z_{i,r}$ for every
$i\in[n]$. Since every round is performed on fresh copies of the input states,
for every fixed $i\in[n]$, the random variables
$Z_{i,1},\ldots,Z_{i,L}$ are independent and identically distributed.
Therefore, defining
\begin{align}
\bar{Z_i}:=\frac{1}{L}\sum_{r=1}^{L}Z_{i,r},
\end{align}
we can apply Hoeffding's inequality to show that $\bar Z_i$
concentrates around $\mathbb{E}[Z_i]$.
We will use these concentration bounds to separate the yes and no
cases.

In the yes case, we have the promise that every pair $(i,j)$ has fidelity at least $1-\epsilon$. Therefore, for every $i\in[n]$,
\begin{align}\label{eq:jm-yes}
    \mathbb{E}[Z_i] = \sum_{j\neq i}F_{ij} \ge (n-1)(1-\epsilon).
\end{align}
In the no case, let $(a,b)$ be a bad pair, so that $F_{ab}\le\epsilon$. For every $j\neq a,b$, we have
\begin{align}\label{eq:spectral-pair-bound}
    F_{aj}+F_{bj}=\bra{\psi_j}\big(\ket{\psi_a}\!\bra{\psi_a}+\ket{\psi_b}\!\bra{\psi_b}\big)\ket{\psi_j}\le \big\|\ket{\psi_a}\!\bra{\psi_a}+\ket{\psi_b}\!\bra{\psi_b}\big\|_{\infty} = 1+|\braket{\psi_a}{\psi_b}|\le 1+\sqrt{\epsilon},
\end{align}
since the nonzero eigenvalues of $\ket{\psi_a}\!\bra{\psi_a}+\ket{\psi_b}\!\bra{\psi_b}$ are $1\pm|\braket{\psi_a}{\psi_b}|$. Summing over $j$, the two expectation values satisfy
\begin{align}
    \mathbb{E}[Z_a + Z_b] = 2F_{ab}+\sum_{j\neq a,b}(F_{aj}+F_{bj}) \le 2\epsilon + (n-2)(1+\sqrt{\epsilon}).
\end{align}
Hence $\min\{\mathbb{E}[Z_a],\mathbb{E}[Z_b]\}\le\frac12\mathbb{E}[Z_a+Z_b]$, so there exists an index $i^*\in\{a,b\}$ such that
\begin{align}\label{eq:jm-no}
    \mathbb{E}[Z_{i^*}]\le \epsilon+\frac{n-2}{2}(1+\sqrt{\epsilon}).
\end{align}
Note that the algorithm does not need to know $a$, $b$, or $i^*$: it simply checks whether $\min_{i\in[n]}\bar Z_i$ falls below the threshold.
From \eq{jm-yes} and \eq{jm-no}, with $\gamma =\frac{1}{2}-\sqrt{\epsilon}$, the gap between the two cases is
\begin{align}
    \Delta &:= (n-1)(1-\epsilon)-\left(\epsilon+\frac{n-2}{2}(1+\sqrt{\epsilon})\right) \\
    &= \frac{n}{2}\left(1-2\epsilon-(1-\frac{2}{n})\sqrt{\epsilon}\right)\\
    &\ge \frac{n}{2}(1-2\epsilon-\sqrt{\epsilon})\\
    &= \frac{n}{2}\gamma(3-2\gamma)\ge n\gamma.
\end{align}
We now follow the same analysis as in \lem{k-fold-swap}. Let $\mu_{\mathrm{yes}}:= (n-1)(1-\epsilon)$ and $\mu_{\mathrm{no}}:=\epsilon+\frac{n-2}{2}(1+\sqrt{\epsilon})$. The threshold $\tau$ in \alg{jm-protocol} is the midpoint of $\mu_{\mathrm{yes}}$ and $\mu_{\mathrm{no}}$, so that
\begin{align}
    \frac{\Delta}{2} = \mu_{\mathrm{yes}}- \tau = \tau -\mu_{\mathrm{no}}.
\end{align}
To distinguish the two cases, we apply Hoeffding's inequality to each case.

\begin{theorem}[Soundness] \label{thm:jucysmurphy-soundness}
    For $\epsilon < \frac{1}{4}$, \alg{jm-protocol} with exact Schur transforms rejects any no instance with probability at least $1-\delta/2$. Moreover, for soundness alone, $L=\left\lceil\frac{8}{\gamma^2}\log(2/\delta)\right\rceil$ rounds already suffice, i.e., $O\left(\frac{1}{(1-2\sqrt{\epsilon})^2}\log(2/\delta)\right)$ copies of each input state.
\end{theorem}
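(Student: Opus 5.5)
Given a no instance, I will fix a bad pair $(a,b)$ with $F_{ab}\le\epsilon$ and take the index $i^*\in\{a,b\}$ from \eq{jm-no}, so that $\mathbb{E}[Z_{i^*}]\le\mu_{\mathrm{no}}=\tau-\Delta/2$. The algorithm rejects whenever $\min_i\bar Z_i<\tau$, which holds in particular if $\bar Z_{i^*}<\tau$. It therefore suffices to bound $\Pr[\bar Z_{i^*}\ge\tau]$ for this single fixed index. This is why soundness does not need a union bound over $i$ and gets $\log(2/\delta)$ in place of $\log(2n/\delta)$.

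With exact Schur transforms, the rounds use fresh copies, so $Z_{i^*,1},\ldots,Z_{i^*,L}$ are i.i.d. By \lem{contentexpectation} each has mean $\mathbb{E}[Z_{i^*}]$ and lies in $[-(n-1),n-1]$, an interval of width $2(n-1)$. Hoeffding's inequality then gives
\begin{align}
\Pr[\bar Z_{i^*}\ge\tau]\le\Pr\left[\bar Z_{i^*}-\mathbb{E}[Z_{i^*}]\ge\tfrac{\Delta}{2}\right]\le\exp\left(-\frac{2L(\Delta/2)^2}{4(n-1)^2}\right)=\exp\left(-\frac{L\Delta^2}{8(n-1)^2}\right).
\end{align}
Using the bound $\Delta\ge n\gamma$ established above, the exponent is at least $L n^2\gamma^2/(8(n-1)^2)\ge L\gamma^2/8$. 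Hence $L\ge\frac{8}{\gamma^2}\log(2/\delta)$ makes the failure probability at most $\delta/2$. The algorithm's choice $L=\lceil\frac{8}{\gamma^2}\log\frac{2n}{\delta}\rceil$ is at least this, so the first claim follows.

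For the copy count, each round uses two copies of every state, one for the forward block and one for the reverse block. The total is $2L=O(\gamma^{-2}\log(2/\delta))$ copies of each state. Since $\gamma=\frac{1}{2}(1-2\sqrt\epsilon)$, this equals $O((1-2\sqrt\epsilon)^{-2}\log(2/\delta))$.

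The main point needing care is getting the Hoeffding constants right: the range of $Z_i$ grows like $n$, and this has to be offset by the gap $\Delta\ge n\gamma$, which also grows linearly in $n$. The ratio $n/(n-1)\ge1$ then lets the $n$-dependence cancel in our favour. Everything else follows directly from \lem{contentexpectation} and \eq{jm-no}.
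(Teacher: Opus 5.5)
Your proposal is correct and matches the paper's proof. Both reduce acceptance of a no instance to the single event $\bar Z_{i^*}\ge\tau$ for the index $i^*$ from \eq{jm-no}. Both then apply Hoeffding's inequality with range $2(n-1)$ and deviation $\Delta/2$, and use $\Delta\ge n\gamma$ to obtain the bound $e^{-L\gamma^2/8}$, so that $L\ge\frac{8}{\gamma^2}\log(2/\delta)$ rounds, i.e.\ $2L$ copies, suffice.
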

\begin{proof}
    Given a no instance, the protocol accepts only if $\min_{i\in[n]}\bar{Z}_i\ge\tau$, which implies $\bar{Z}_{i^*} \ge \tau$ and hence $\bar{Z}_{i^*}-\mathbb{E}[Z_{i^*}]\ge\Delta/2$ by \eq{jm-no}.
    Since $Z_{i^*,1},\ldots,Z_{i^*,L}$ are i.i.d.\ and take values in $[-(n-1),n-1]$ by \lem{contentexpectation}, Hoeffding's inequality gives
    \begin{align}
    \mathrm{Pr}\left[\frac{1}{L}\sum^L_{r=1}Z_{i^*,r} - \mathbb{E}[Z_{i^*}]\ge \frac{\Delta}{2}\right] \le \exp\left(-\frac{2L(\Delta/2)^2}{4(n-1)^2}\right) \le \exp\left(-\frac{L\cdot n^2\gamma^2}{8(n-1)^2}\right)\le e^{-L\gamma^2/8}.
\end{align}
To bound the error probability by $\delta/2$, it thus suffices to take $L \ge \frac{8}{\gamma^2}\log(2/\delta)$ rounds, i.e., $2L = O\left(\frac{1}{\gamma^2}\log(2/\delta)\right)$ copies of each state. The choice of $L$ in \alg{jm-protocol} satisfies this condition.

\end{proof}

\begin{theorem}[Completeness]\label{thm:jucysmurphy-completeness}
     For $\epsilon < \frac{1}{4}$, \alg{jm-protocol} with exact Schur transforms accepts any yes instance with probability at least $1-\delta/2$, using $2L=O\left(\frac{1}{(1-2\sqrt{\epsilon})^2}\log(n/\delta)\right)$ copies of each input state.
\end{theorem}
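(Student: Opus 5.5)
The completeness proof follows the soundness proof in \thm{jucysmurphy-soundness}, with one change: in a yes instance we do not know which index might cause a false rejection, so we take a union bound over all $n$ indices.

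Fix a yes instance. The protocol rejects only if $\bar Z_i<\tau$ for some $i\in[n]$. For each fixed $i$, \eq{jm-yes} gives $\mathbb{E}[Z_i]\ge\mu_{\mathrm{yes}}$. Since $\tau=\mu_{\mathrm{yes}}-\Delta/2$, the event $\bar Z_i<\tau$ implies
\begin{align}
\bar Z_i-\mathbb{E}[Z_i]<-\frac{\Delta}{2}.
\end{align}

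The rounds use fresh copies, so $Z_{i,1},\ldots,Z_{i,L}$ are i.i.d. By \lem{contentexpectation} they take values in an interval of length $2(n-1)$. The lower-tail Hoeffding bound, combined with $\Delta\ge n\gamma$, gives
\begin{align}
\Pr[\bar Z_i<\tau]\le\exp\left(-\frac{2L(\Delta/2)^2}{4(n-1)^2}\right)\le\exp\left(-\frac{Ln^2\gamma^2}{8(n-1)^2}\right)\le e^{-L\gamma^2/8}.
\end{align}
This is the same computation as in the soundness proof, with the opposite tail.

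Taking a union bound over the $n$ indices, the completeness error is at most $n e^{-L\gamma^2/8}$. The algorithm sets $L=\lceil\frac{8}{\gamma^2}\log\frac{2n}{\delta}\rceil$, so this error is at most $\delta/2$.

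For the sample count: each round uses two copies of each state, one in the forward block and one in the reverse block. The total is therefore
\begin{align}
2L=O\left(\frac{1}{\gamma^2}\log(n/\delta)\right)
\end{align}
copies per state. Since $\gamma=\frac12-\sqrt\epsilon=\frac12(1-2\sqrt\epsilon)$, this equals $O\left(\frac{1}{(1-2\sqrt\epsilon)^2}\log(n/\delta)\right)$.

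The only point needing care is that the $Z_{i,r}$ for different $i$ within one round are correlated, since they come from the same measurement outcome. This causes no problem: the union bound requires no independence across $i$, and independence across rounds $r$ for each fixed $i$ is all that Hoeffding's inequality uses.
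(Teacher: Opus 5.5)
Your proof is correct and follows essentially the same route as the paper: a lower-tail Hoeffding bound for each fixed $i$ using the range $[-(n-1),n-1]$ and $\Delta\ge n\gamma$, then a union bound over the $n$ indices with $L=\lceil\frac{8}{\gamma^2}\log\frac{2n}{\delta}\rceil$. You also note that within-round correlations across different $i$ are harmless, since only independence across rounds is needed; this is a correct clarification that the paper leaves implicit.
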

\begin{proof}
In the yes case, the protocol rejects only if $\bar{Z}_i < \tau$ for some $i\in[n]$, which implies $\bar Z_i-\mathbb{E}[Z_i]\le-\Delta/2$ by \eq{jm-yes}. For each fixed $i$, Hoeffding's inequality gives
\begin{align}
    \mathrm{Pr}\left[\frac{1}{L}\sum^L_{r=1}Z_{i,r} - \mathbb{E}[Z_{i}]\le -\frac{\Delta}{2}\right] \le \exp\left(-\frac{2L(\Delta/2)^2}{4(n-1)^2}\right) \le \exp\left(-\frac{L\cdot n^2\gamma^2}{8(n-1)^2}\right)\le e^{-L\gamma^2/8}.
\end{align}
By a union bound over all $i\in[n]$, the error probability is
\begin{align}
    \mathrm{Pr}[\mathtt{reject}\mid\mathrm{yes}] \le n\cdot e^{-L\gamma^2/8}.
\end{align}
With $L = \left\lceil\frac{8}{\gamma^2}\log(2n/\delta)\right\rceil$ as in \alg{jm-protocol}, the error probability is at most $\delta/2$, and the protocol uses $2L=O\left(\frac{1}{\gamma^2}\log(n/\delta)\right)$ copies of each state.
\end{proof}

\paragraph{Approximate Schur transforms.}
The analysis above assumes that the Schur transforms are implemented exactly. \alg{jm-protocol} applies $2L$ Schur transforms in total, each implemented by a circuit whose channel is within diamond-norm distance $\xi=\delta/(4L)$ of the exact transform. Replacing the exact transforms by their approximations one at a time, the triangle inequality shows that the probability of any outcome of the protocol changes by at most $2L\xi=\delta/2$. Together with \thm{jucysmurphy-soundness} and \thm{jucysmurphy-completeness}, the implemented protocol therefore solves $\mathrm{QSI}_n^\epsilon$ with error at most $\delta/2+\delta/2=\delta$, using $2L=O\left(\frac{1}{\gamma^2}\log(n/\delta)\right)$ copies of each input state. Since the cost of the Schur transform depends only polylogarithmically on the inverse precision $1/\xi=O\left(\frac{1}{\delta\gamma^2}\log(n/\delta)\right)$, the required precision affects only the polylogarithmic factors hidden in the gate complexity below.

\paragraph{Complexity of \alg{jm-protocol}.}
\alg{jm-protocol} uses $O\left(\frac{1}{(1-2\sqrt\epsilon)^2}\log(n/\delta)\right)$ copies of each input state to solve $\mathrm{QSI}_{n}^\epsilon$ with probability $1-\delta$.
Compared to \alg{random-matching}, \alg{jm-protocol} improves the sample complexity by a multiplicative factor of $\log(2/\delta)$.
As we will see in \sec{lower-bound}, this algorithm is optimal in terms of sample complexity for constant $\epsilon$.

Regarding the gate complexity and depth,
\alg{jm-protocol} runs the Schur transform $2L=O(\frac{1}{\gamma^2}\log(n/\delta))$ times in parallel to learn the expectations of the Jucys--Murphy elements. By the analysis of Burchardt et al.~\cite{burchardtHighdimensionalQuantumSchur2025}, the circuit of Bacon, Chuang, and Harrow \cite{baconQuantumSchurClebschGordan2007} has gate complexity $\tilde O\left(\min\{n^5, nd^4\}\right)$, and the corrected version of Krovi's algorithm \cite{burchardtHighdimensionalQuantumSchur2025} has gate complexity $\tilde O(n^4)$, where $\tilde O$ hides polylogarithmic factors in $n$, $d$, and the inverse precision. Hence the Schur transform can be implemented with $\tilde O\left(\min\{n^4, nd^4\}\right)$ gates, and therefore also in depth $\tilde O\left(\min\{n^4, nd^4\}\right)$.
Therefore, the gate complexity of \alg{jm-protocol} is $\tilde O\left(\frac{1}{\gamma^2}\log(n/\delta)\cdot \min\{n^4, nd^4\}\right)$, and the depth is $\tilde O(\min\{n^4, nd^4\})$.

\section{Sample complexity lower bound}\label{sec:lower-bound}

In this section, we prove a sample complexity lower bound for the quantum state identity problem. We show that, for $0<\delta\le 1/2-c$ with a fixed constant $c>0$, any quantum algorithm that solves the problem with probability at least $1-\delta$ must use $\Omega\left(\frac{\log(n/\delta)}{\log(1/\epsilon)}\right)$ copies of each input state.
For constant $\epsilon$, our lower bound is tight, since in the previous section we have shown that there is a quantum algorithm that solves the problem using $O(\log(n/\delta))$ copies of each input state.

To prove the lower bound, we consider the following problem.
\begin{problem}\label{prob:hard}
    Fix an angle $\alpha\in[0,\pi/2]$. For $\varphi\in[0,2\pi)$, let $\ket{\psi_{\varphi}}:=\cos\alpha\ket{0} + e^{\i\varphi}\sin\alpha\ket{1}$ and $\ket{\phi_\varphi}:=\sin\alpha\ket{0} + e^{\i\varphi}\cos\alpha\ket{1}$.
    The problem is to distinguish between the following two cases:
    \begin{enumerate}
        \item Yes case: for some $\theta=(\theta_1,\ldots,\theta_n)\in[0,2\pi)^n$, the given quantum state is of the form
        \begin{align}
            \ket{\Psi_0(\theta)} := \ket{\psi_{\theta_1}}\otimes \cdots \otimes \ket{\psi_{\theta_n}}.
        \end{align}
        \item No case: for some $\theta\in[0,2\pi)^n$ and some $i\in[n]$, the given quantum state is of the form
        \begin{align}
            \ket{\Psi_{1i}(\theta)} := \ket{\psi_{\theta_1}}\otimes \cdots \otimes \ket{\psi_{\theta_{i-1}}}\otimes \ket{\phi_{\theta_i}}\otimes \ket{\psi_{\theta_{i+1}}}\otimes \cdots \otimes \ket{\psi_{\theta_n}}.
        \end{align}
    \end{enumerate}
\end{problem}

The following theorem shows that for sufficiently small $\alpha$, \prob{hard} is a special case of the quantum state identity problem with overlap parameter $\epsilon$.

\begin{theorem}\label{thm:hard-qsi}
    For $\alpha\leq\frac{1}{2}\arccos\sqrt{1-\epsilon}$, \prob{hard} is a special case of $\mathrm{QSI}_n^\epsilon$.
\end{theorem}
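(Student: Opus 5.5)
Since \prob{hard} and \dfn{two-sided QSI} are both promise problems, it suffices to compute the pairwise fidelities. Every yes instance of \prob{hard} must satisfy the yes promise of $\mathrm{QSI}_n^\epsilon$, and every no instance must satisfy the no promise. The states are all single-qubit (so $d=2$), and the proof reduces to two overlap calculations followed by trigonometric inequalities in $\alpha$.

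\textbf{Yes case.} For arbitrary $\varphi,\varphi'$ we have $\braket{\psi_\varphi}{\psi_{\varphi'}}=\cos^2\alpha+e^{\i(\varphi'-\varphi)}\sin^2\alpha$. Its modulus is minimized when $e^{\i(\varphi'-\varphi)}=-1$, which gives $|\braket{\psi_\varphi}{\psi_{\varphi'}}|\ge\cos^2\alpha-\sin^2\alpha=\cos 2\alpha$. This is nonnegative since $2\alpha\le\arccos\sqrt{1-\epsilon}\le\pi/2$. Hence $F\ge\cos^2(2\alpha)$. The hypothesis gives $2\alpha\le\arccos\sqrt{1-\epsilon}$, and $\cos$ is decreasing on $[0,\pi/2]$, so $\cos 2\alpha\ge\sqrt{1-\epsilon}$. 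Therefore every pair in $\ket{\Psi_0(\theta)}$ has fidelity at least $1-\epsilon$. Geometrically, the Bloch vectors lie on a cone of half-angle $2\alpha$ about the north pole, and the worst case is two antipodal points on that cone.

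\textbf{No case.} In $\ket{\Psi_{1i}(\theta)}$ the pairs $j,k\neq i$ are irrelevant, because the no promise only needs one bad pair. The pair $(i,j)$ for any $j\ne i$ will serve (such a $j$ exists since $n\ge2$). We compute $\braket{\phi_{\theta_i}}{\psi_{\theta_j}}=\sin\alpha\cos\alpha+e^{\i(\theta_j-\theta_i)}\cos\alpha\sin\alpha$, whose modulus is at most $\sin 2\alpha$. Hence $F_{ij}\le\sin^2(2\alpha)=1-\cos^2(2\alpha)\le 1-(1-\epsilon)=\epsilon$, using the same bound $\cos 2\alpha\ge\sqrt{1-\epsilon}$ as before.

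Both promises therefore hold for every choice of $\theta$ and $i$, so \prob{hard} is a restriction of $\mathrm{QSI}_n^\epsilon$ with $d=2$. The argument has no real obstacle. The only points needing care are that $\cos 2\alpha\ge 0$, so that the triangle-inequality lower bound on the modulus is valid, and that the direction of the monotonicity of $\cos$ is applied correctly.
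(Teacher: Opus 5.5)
Your proof is correct and follows the paper's argument essentially verbatim. It uses the same two overlap computations: $|\cos^2\alpha+e^{\mathrm{i}t}\sin^2\alpha|^2\ge\cos^2 2\alpha\ge 1-\epsilon$ in the yes case, and $\sin^2\alpha\cos^2\alpha\,|1+e^{\mathrm{i}t}|^2\le\sin^2 2\alpha\le\epsilon$ for the pair $(i,j)$ in the no case. Your check that $\cos 2\alpha\ge 0$ is harmless but not actually needed, since $|a+b|^2\ge(|a|-|b|)^2$ holds regardless of sign.
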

\begin{proof}
In the yes case, for every $\theta_i,\theta_j\in[0,2\pi)$,
\begin{align}
    |\bra{\psi_{\theta_i}}\psi_{\theta_j}\rangle|^2 = |\cos^2\alpha+\sin^2\alpha\, e^{\i(\theta_j-\theta_i)}|^2 \geq (\cos^2\alpha-\sin^2\alpha)^2=\cos^22\alpha \geq 1-\epsilon.
\end{align}
In the no case, for $j\neq i$,
\begin{align}
    |\bra{\phi_{\theta_i}}\psi_{\theta_j}\rangle|^2  = \sin^2\alpha\cos^2\alpha\, |1+e^{\i(\theta_j-\theta_i)}|^2 \leq 4\sin^2\alpha\cos^2\alpha = \sin^2 2\alpha \leq \epsilon.
\end{align}
Therefore, the promises of the quantum state identity problem hold on both sides, and thus \prob{hard} is a special case of $\mathrm{QSI}_n^\epsilon$.
\end{proof}

We first prove a lower bound that depends on the success probability but not on $n$.
The $n$ dependence will be established later by bounding the trace distance between the average states in the yes and no cases.
For a fixed coordinate $i$, any distinguisher that solves \prob{hard} must in particular be able to tell whether the $i$th coordinate of the quantum state is $\ket{\psi_{\theta_i}}$ or $\ket{\phi_{\theta_i}}$.
For $\alpha=\frac12\arccos\sqrt{1-\epsilon}$, the fidelity between $\ket{\psi_{\theta_i}}$ and $\ket{\phi_{\theta_i}}$ is $\sin^2 2\alpha=\epsilon$ exactly, so the fidelity between $m$ copies of these two states is $\epsilon^m$, and we can establish a lower bound on the number of copies $m$ required to distinguish them with probability at least $1-\delta$.

\begin{theorem}\label{thm:delta-bound}
    For $0<\delta<1/2$, any quantum algorithm that solves \prob{hard} with $\alpha=\frac12\arccos\sqrt{1-\epsilon}$ (and hence any quantum algorithm that solves $\mathrm{QSI}_n^\epsilon$) with probability at least $1-\delta$ must use at least $\frac{\log(1/(4\delta(1-\delta)))}{\log(1/\epsilon)}$ copies of each input state. In particular, it must use at least $\frac{\log(1/(4\delta))}{\log(1/\epsilon)}$ copies.
\end{theorem}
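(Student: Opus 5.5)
Since every instance of \prob{hard} with $\alpha=\frac12\arccos\sqrt{1-\epsilon}$ is an instance of $\mathrm{QSI}_n^\epsilon$ by \thm{hard-qsi}, it suffices to prove the bound for \prob{hard}. I will reduce it to a two-state discrimination problem by fixing all parameters except one register. Fix $i=1$ and fix the phases to $\theta=(0,\ldots,0)$. Suppose an algorithm uses $m$ copies of each input state and succeeds with probability at least $1-\delta$ on every instance. It then distinguishes the two pure states
\begin{align}
\ket{A}:=\ket{\psi_0}^{\otimes m}\otimes\bigotimes_{j\ge2}\ket{\psi_0}^{\otimes m},\qquad \ket{B}:=\ket{\phi_0}^{\otimes m}\otimes\bigotimes_{j\ge2}\ket{\psi_0}^{\otimes m},
\end{align}
where $\ket{A}$ is a yes instance and $\ket{B}$ is a no instance. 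Its error is at most $\delta$ on each, and so also at most $\delta$ under the uniform prior. Any internal randomness or adaptivity only yields some overall POVM $\{E_{\mathrm{acc}},E_{\mathrm{rej}}\}$ on the full collection of copies, so this setting is covered.

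Next I compute the overlap. The untouched registers contribute factors of $1$, so
\begin{align}
|\braket{A}{B}|^2=|\braket{\psi_0}{\phi_0}|^{2m}=(2\sin\alpha\cos\alpha)^{2m}=\sin^{2m}2\alpha=\epsilon^m,
\end{align}
using $\cos2\alpha=\sqrt{1-\epsilon}$. By the Helstrom bound, the minimal average error for two pure states with equal priors is
\begin{align}
p_{\mathrm{err}}^\star=\tfrac12\left(1-\sqrt{1-\epsilon^m}\right).
\end{align}
Rather than rely only on the average, I will use the two-sided version directly. For POVM elements with $\bra{A}E_{\mathrm{rej}}\ket{A}\le\delta$ and $\bra{B}E_{\mathrm{acc}}\ket{B}\le\delta$, the fidelity's data-processing inequality for the induced binary distributions gives
\begin{align}
|\braket{A}{B}|\le\sqrt{\delta(1-\delta)}+\sqrt{(1-\delta)\delta}=2\sqrt{\delta(1-\delta)}.
\end{align}

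Squaring this yields $\epsilon^m\le 4\delta(1-\delta)$. Taking logarithms, and noting $\log(1/\epsilon)>0$, gives $m\ge\frac{\log(1/(4\delta(1-\delta)))}{\log(1/\epsilon)}$. The hypothesis $\delta<1/2$ ensures $4\delta(1-\delta)<1$, so this bound is positive. For the weaker form, $1-\delta\le1$ gives $\log\frac{1}{4\delta(1-\delta)}\ge\log\frac1{4\delta}$.

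The only delicate step is justifying the fidelity inequality for the classical output distributions. The Bhattacharyya coefficient of $(1-a,a)$ against $(b,1-b)$, with $a,b\le\delta$, is at most $2\sqrt{\delta(1-\delta)}$ for $\delta\le1/2$. This holds because $\sqrt{(1-a)b}+\sqrt{a(1-b)}$ is increasing in $a$ and $b$ on this range, so it is maximized at $a=b=\delta$. Monotonicity of fidelity under measurement then bounds $|\braket{A}{B}|$ by this coefficient.
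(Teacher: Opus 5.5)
Your proof is correct and reaches exactly the paper's inequality $\epsilon^m\le 4\delta(1-\delta)$, but it differs from the paper's argument in two places.

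First, the reduction. The paper embeds the task of discriminating $\ket{\psi_0}^{\otimes m}$ from $\ket{\phi_0}^{\otimes m}$ into \prob{hard} by applying a uniformly random phase gate $\mathrm{diag}(1,e^{\i\theta_i})$ to each copy, and filling the other registers with copies of $\ket{\psi_{\theta_j}}$ for random $\theta_j$. You instead fix $\theta=0$ and $i=1$ and use the two resulting instances directly. This is legitimate because a solver must succeed on every instance. The paper's randomized embedding would additionally cover algorithms that only succeed on average over $\theta$.

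Second, the final bound. The paper uses the equal-prior Helstrom error $\frac12(1-\sqrt{1-\epsilon^m})\le\delta$. You use data processing of the fidelity on the accept/reject distributions, followed by a monotonicity step. That step is right: $\partial_a\bigl[\sqrt{(1-a)b}+\sqrt{a(1-b)}\bigr]\ge 0$ exactly when $a+b\le 1$, which holds since $a,b\le\delta<1/2$. You could have stopped earlier, though. The Helstrom formula you already wrote, together with your remark that worst-case error at most $\delta$ implies equal-prior error at most $\delta$, gives $\sqrt{1-\epsilon^m}\ge 1-2\delta$ and the same bound in one line.

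What your fidelity route buys is that it constrains the two error probabilities separately. For two pure states, $\sqrt{(1-a)b}+\sqrt{a(1-b)}\ge|\braket{A}{B}|$ is attained and describes the full error trade-off. That would matter under asymmetric error requirements, but it gives nothing extra for the symmetric requirement in this theorem.
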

\begin{proof}
    Fix any $i\in[n]$. We can reduce the problem of distinguishing $\ket{\psi_{0}}$ from $\ket{\phi_{0}}$ to \prob{hard} by applying the phase gate $\mathrm{diag}(1,e^{\i\theta_i})$ for a uniformly random $\theta_i$ to each copy of the given state, embedding the resulting copies into the $i$-th coordinate, and preparing the remaining $n-1$ coordinates as copies of $\ket{\psi_{\theta_j}}$ for uniformly random $\theta_j$.
    Then any algorithm that solves \prob{hard} with probability at least $1-\delta$ can be used to distinguish $\ket{\psi_{0}}$ from $\ket{\phi_{0}}$ with the same probability.

    Now, given $m$ copies of either $\ket{\psi_{0}}$ or $\ket{\phi_{0}}$, the optimal distinguishing measurement is the Helstrom measurement, whose error probability is
    \begin{align}
        P_e = \frac{1}{2}\left(1-\sqrt{1-|\langle\psi_0|\phi_0\rangle|^{2m}}\right) = \frac{1}{2}\left(1-\sqrt{1-(\sin^22\alpha)^m}\right)=\frac{1}{2}\left(1-\sqrt{1-\epsilon^m}\right).
    \end{align}
    Since $\delta<1/2$, the condition $P_e\le\delta$ is equivalent to $\sqrt{1-\epsilon^m}\ge 1-2\delta$, i.e., to $\epsilon^m\le 4\delta(1-\delta)$, which gives $m\geq \frac{\log(1/(4\delta(1-\delta)))}{\log(1/\epsilon)}$. The second bound follows from $4\delta(1-\delta)\le 4\delta$.
\end{proof}

To obtain the dependence on $n$, observe that any quantum algorithm $\mathcal A$ that solves \prob{hard} with failure probability at most $\delta$ must accept $\ket{\Psi_0(\theta)}$ with probability at least $1-\delta$ averaged over uniformly random $\theta\in[0,2\pi)^n$, and must reject $\ket{\Psi_{1i}(\theta)}$ with probability at least $1-\delta$ averaged over uniformly random $i$ and $\theta$.
Thus we can establish a lower bound by bounding the trace distance between the average states in the yes and no cases.
To this end, we first prove the following lemma.

\begin{lemma}\label{lem:binomial}
    Let $\ket{\psi_\varphi}=\sqrt{1-p}\ket{0}+e^{\i\varphi}\sqrt{p}\ket{1}$ for $p\in[0,1]$. Then
    \begin{align}
        \Exp_{\varphi\sim\mathrm{Uniform}(0,2\pi)}[\proj{\psi_\varphi}^{\otimes m}] = \sum_{k=0}^m \binom{m}{k}p^{k}(1-p)^{m-k} \proj{\wt_k},
    \end{align}
    where $\ket{\wt_k}:=\binom{m}{k}^{-1/2}\sum_{x:|x|=k}\ket{x}$ is the uniform superposition of all $m$-bit strings of Hamming weight $k$.
\end{lemma}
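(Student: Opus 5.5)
The plan is to expand the $m$-fold tensor power in the computational basis and then average over the phase term by term. Writing $\ket{\psi_\varphi}^{\otimes m}=\sum_{x\in\bit^m}(1-p)^{(m-|x|)/2}p^{|x|/2}e^{\i\varphi|x|}\ket{x}$, we will obtain
\begin{align}
\proj{\psi_\varphi}^{\otimes m}=\sum_{x,y\in\bit^m}(1-p)^{(2m-|x|-|y|)/2}p^{(|x|+|y|)/2}e^{\i\varphi(|x|-|y|)}\ket{x}\!\bra{y}.
\end{align}
The key observation is that the phase attached to $\ket{x}\!\bra{y}$ depends only on the difference of Hamming weights $|x|-|y|$. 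This is simply the statement that $\proj{\psi_\varphi}^{\otimes m}=Z_\varphi^{\otimes m}\proj{\psi_0}^{\otimes m}(Z_\varphi^{\dagger})^{\otimes m}$ with $Z_\varphi=\mathrm{diag}(1,e^{\i\varphi})$.

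Next we average over $\varphi$. Since $\Exp_\varphi[e^{\i\varphi k}]=\delta_{k,0}$ for integers $k$, only the terms with $|x|=|y|$ survive. Grouping these by the common weight $k$, each block has coefficient $(1-p)^{m-k}p^{k}$ and contains the operator $\sum_{|x|=|y|=k}\ket{x}\!\bra{y}$. That operator is exactly $\binom{m}{k}\proj{\wt_k}$, by the definition of $\ket{\wt_k}$ as the normalized uniform superposition over weight-$k$ strings. Substituting gives $\sum_k\binom{m}{k}p^k(1-p)^{m-k}\proj{\wt_k}$, which is the claimed expression.

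There is no real obstacle in this argument. The only step that needs care is the bookkeeping of the normalization $\binom{m}{k}^{-1/2}$, so that the outer product of $\ket{\wt_k}$ with itself produces the factor $\binom{m}{k}$ in front of the binomial weight. A reasonable sanity check is that the result has unit trace, which follows from the binomial theorem.
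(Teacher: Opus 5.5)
Your proof is correct and follows the paper's argument: expand $\ket{\psi_\varphi}^{\otimes m}$ in the computational basis, observe that the phase depends only on Hamming weight, and average over $\varphi$ so that only the equal-weight terms survive, which gives $\binom{m}{k}p^k(1-p)^{m-k}$ times the projector onto $\ket{\wt_k}$. The only cosmetic difference is that the paper regroups the expansion into the states $\ket{\wt_k}$ before forming the outer product, whereas you average the $\ket{x}\!\bra{y}$ terms first and regroup afterwards.
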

\begin{proof}
    By direct calculation, we have
    \begin{align}
        \ket{\psi_\varphi}^{\otimes m}
        &= \sum_{x\in\bit^m} (1-p)^{\frac{m-|x|}{2}} p^{\frac{|x|}{2}} e^{\i\varphi |x|} \ket{x} \\
        &= \sum_{k=0}^m (1-p)^{\frac{m-k}{2}} p^{\frac{k}{2}} e^{\i\varphi k} \binom{m}{k}^{1/2}\ket{\wt_k}.
    \end{align}
    Taking the expectation over $\varphi\sim\mathrm{Uniform}(0,2\pi)$ removes the off-diagonal terms, leaving us with
    \begin{align}
        \Exp_{\varphi\sim\mathrm{Uniform}(0,2\pi)}[\proj{\psi_\varphi}^{\otimes m}]
        &= \sum_{k,\ell=0}^m \binom{m}{k}^{1/2}\binom{m}{\ell}^{1/2}p^{\frac{k+\ell}{2}}(1-p)^{m-\frac{k+\ell}{2}} \cdot \Exp_{\varphi}[e^{\i\varphi(k-\ell)}]\ket{\wt_k}\!\bra{\wt_{\ell}} \\
        &= \sum_{k=0}^m \binom{m}{k}p^{k}(1-p)^{m-k} \cdot \ket{\wt_k}\!\bra{\wt_{k}}.
    \end{align}
\end{proof}

\lem{binomial} implies that in the basis $\{\ket{\wt_k}: k=0,\ldots,m\}$, the state $\Exp_{\varphi\sim\mathrm{Uniform}(0,2\pi)}[\proj{\psi_\varphi}^{\otimes m}]$ is diagonal, and may be viewed as a binomial distribution with $m$ trials and success probability $p$.

By \lem{binomial}, we may view the task of distinguishing the average states as the following classical hypothesis testing problem.
\begin{problem}\label{prob:hypothesis}
Let $X_1,\ldots,X_n,Y$ be mutually independent with $X_i\sim\Binomial(m,p)$ and $Y\sim\Binomial(m,1-p)$, and let $I$ be uniformly distributed over $[n]$, independently of $(X_1,\ldots,X_n,Y)$.
\begin{itemize}
    \item Yes case: given $(X_1,\ldots,X_n)$.
    \item No case: same as the yes case, except that $X_I$ is replaced with $Y$.
\end{itemize}
\end{problem}
We now show that the total variation distance between the yes and no cases is small when $m$ is small.
\begin{theorem}\label{thm:tv-bound}
    Let $p\in(0,1)$, and let $X_1,\ldots,X_n,Y$ be mutually independent random variables with $X_i\sim\Binomial(m,p)$ for every $i\in[n]$ and $Y\sim\Binomial(m,1-p)$. Let $I$ be uniformly distributed over $[n]$ and independent of $(X_1,\ldots,X_n,Y)$, and let $Z_1,\ldots,Z_n$ be the same as $X_1,\ldots,X_n$ except that $Z_I$ is replaced with $Y$. Then the total variation distance between $(X_1,\ldots,X_n)$ and $(Z_1,\ldots,Z_n)$ is at most $\left(\frac{1}{4n}\left[\left(\frac{3p^2-3p+1}{p(1-p)}\right)^m - 1 \right]\right)^{1/2}$.
\end{theorem}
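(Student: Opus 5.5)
The plan is the standard chi-square (second-moment) method for a planted mixture. Let $P=\mathrm{Binomial}(m,p)^{\otimes n}$ be the law of $(X_1,\ldots,X_n)$, and let $Q=\frac1n\sum_{i=1}^n Q_i$ be the law of $(Z_1,\ldots,Z_n)$. Here $Q_i$ is the product law in which coordinate $i$ has law $\mathrm{Binomial}(m,1-p)$.

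First, bound total variation by chi-square divergence via Cauchy--Schwarz:
\begin{align}
d_{\mathrm{TV}}(P,Q)=\frac12\,\Exp_P\left|\frac{dQ}{dP}-1\right|\le\frac12\sqrt{\chi^2(Q\|P)}.
\end{align}
It therefore suffices to show that $\chi^2(Q\|P)\le\frac1n\left[\left(\frac{3p^2-3p+1}{p(1-p)}\right)^m-1\right]$. The factor $\frac14$ then comes from squaring the $\frac12$.

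Second, write the likelihood ratio as $\frac{dQ}{dP}(x)=\frac1n\sum_i L(x_i)$. Here $L(k)=\frac{q(k)}{b(k)}$, where $b$ is the pmf of $\mathrm{Binomial}(m,p)$ and $q$ is the pmf of $\mathrm{Binomial}(m,1-p)$. Explicitly, $L(k)=\left(\frac{1-p}{p}\right)^{2k-m}$, since the binomial coefficients cancel. Then
\begin{align}
1+\chi^2(Q\|P)=\Exp_P\left[\left(\frac1n\sum_i L(X_i)\right)^2\right]=\frac1{n^2}\sum_{i,j}\Exp_P[L(X_i)L(X_j)].
\end{align}
For $i\ne j$, independence and $\Exp_P[L(X_i)]=1$ give exactly $1$ per term. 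For $i=j$, each term equals $\Exp_P[L(X)^2]=1+\chi^2(q\|b)$. Hence $1+\chi^2(Q\|P)=\frac{n(n-1)}{n^2}+\frac{1}{n}\left(1+\chi^2(q\|b)\right)$, which simplifies to $\chi^2(Q\|P)=\frac1n\chi^2(q\|b)$. This is the key point: the planted location is random, so the cross terms vanish and we gain a factor $1/n$.

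Third, compute $\chi^2(q\|b)$ using tensorization. Binomial laws are products of $m$ Bernoulli laws, and $1+\chi^2$ is multiplicative over products, so
\begin{align}
1+\chi^2(q\|b)=\left(1+\chi^2(\mathrm{Ber}(1-p)\|\mathrm{Ber}(p))\right)^m=\left(\frac{(1-p)^2}{p}+\frac{p^2}{1-p}\right)^m.
\end{align}
The Bernoulli quantity simplifies as
\begin{align}
\frac{(1-p)^3+p^3}{p(1-p)}=\frac{1-3p+3p^2}{p(1-p)}.
\end{align}
This gives $\chi^2(q\|b)=\left(\frac{3p^2-3p+1}{p(1-p)}\right)^m-1$. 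Combining the three steps yields the claimed bound.

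The main obstacle is the second step, namely recognizing that the off-diagonal terms contribute exactly $1$ and cancel. The rest is routine.
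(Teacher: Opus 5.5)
Your proposal is correct and follows essentially the same route as the paper. Both bound the total variation by half the square root of the second moment of the averaged likelihood ratio $\frac1n\sum_i L_i$, use that the cross terms $\mathbb{E}[L_iL_j]=1$ for $i\neq j$ cancel to give the factor $1/n$, and compute $\mathbb{E}[L_i^2]$ in closed form. The only difference is cosmetic: you get $\mathbb{E}[L_i^2]$ by tensorizing the Bernoulli $\chi^2$-divergence, while the paper evaluates the binomial moment generating function directly, which is the same product computation.
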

\begin{proof}
    Let $P$ be the probability mass function of $(X_1,\ldots,X_n)$, let $Q_i$ be the probability mass function of $(Z_1,\ldots,Z_n)$ conditioned on the event that the chosen coordinate is $i$, and let $Q$ be the probability mass function of $(Z_1,\ldots,Z_n)$. 
    Then by definition, $Q=\frac{1}{n}\sum_{i=1}^n Q_i$, and we have 
    \begin{align}
        \|P-Q\|_{TV}
        &= \frac{1}{2} \sum_{k_1,\ldots,k_n=0}^m \left|P(k_1,\ldots,k_n)-\frac{1}{n}\sum_{i=1}^n Q_i(k_1,\ldots,k_n)\right| \\
        &= \frac{1}{2}\Exp_{X_1,\ldots,X_n\sim\Binomial(m,p)}\left[\left|\frac{1}{n}\sum_{i=1}^n\frac{Q_i(X_1,\ldots, X_n)}{P(X_1,\ldots, X_n)} - 1\right|\right] \\
        &\leq \frac{1}{2}\Exp_{X_1,\ldots,X_n\sim\Binomial(m,p)}\left[\left(\frac{1}{n}\sum_{i=1}^n\frac{Q_i(X_1,\ldots, X_n)}{P(X_1,\ldots, X_n)} - 1\right)^2\right]^{1/2}.
    \end{align}
    where the last step is Jensen's inequality. For ease of notation, we define the likelihood ratios
    \begin{align}
        L_i(X_1,\ldots,X_n) := \frac{Q_i(X_1,\ldots, X_n)}{P(X_1,\ldots, X_n)}=\frac{(1-p)^{X_i}p^{m-X_i}}{p^{X_i}(1-p)^{m-X_i}}=\left(\frac{p}{1-p}\right)^{m-2X_i},
    \end{align}
    and write $L_i=L_i(X_1,\ldots,X_n)$ in the following analysis.
    The first two moments of $(L_1,\ldots,L_n)$ can be calculated explicitly as follows:
    \begin{align}
        \Exp_{X_1,\ldots,X_n\sim\Binomial(m,p)}[L_i] 
        = \sum_{k_1,\ldots,k_n=0}^m Q_i(k_1,\ldots, k_n) = 1.
    \end{align}
    For $i\neq j$, we have
    \begin{align}
        \Exp_{X_1,\ldots,X_n\sim\Binomial(m,p)}[L_iL_j] 
        &= 
        \sum_{k_1,\ldots,k_n=0}^m \left(\frac{p}{1-p}\right)^{2m-2k_i-2k_j} P(k_1,\ldots, k_n) \\
        &= \left(\frac{p}{1-p}\right)^{2m} \cdot \Exp_{X\sim\Binomial(m,p)}\left[\left(\frac{1-p}{p}\right)^{2X}\right]^2 \\
        &= \left(\frac{p}{1-p}\right)^{2m} \left[\left(\frac{1-p}{p}\right)^2 p + (1-p)\right]^{2m} \\
        &= 1.
    \end{align}
    This implies that the covariance $\mathrm{Cov}(L_i,L_j)=0$ for $i\neq j$. Moreover,
    \begin{align}
        \Exp_{X_1,\ldots,X_n\sim\Binomial(m,p)}[L_i^2]
        &= \sum_{k_1,\ldots,k_n=0}^m \left(\frac{(1-p)^{k_i}p^{m-k_i}}{p^{k_i}(1-p)^{m-k_i}}\right)^2 P(k_1,\ldots, k_n) \\
        &= \left(\frac{p}{1-p}\right)^{2m}\cdot \Exp_{X\sim\Binomial(m,p)}\left[ \left(\frac{1-p}{p}\right)^{4X} \right] \\
        &= \left(\frac{p}{1-p}\right)^{2m} \left[\left(\frac{1-p}{p}\right)^4 p + (1-p)\right]^m \\
        &= \left( \frac{3p^2-3p+1}{p(1-p)}\right)^m.
    \end{align}
    Combining these moments, we obtain
    \begin{align}
        \Exp_{X_1,\ldots,X_n\sim\Binomial(m,p)}\left[\left(\frac{1}{n}\sum_{i=1}^n L_i - 1\right)^2\right]
        &= \frac{1}{n^2} \left(\sum_{i=1}^n\mathrm{Var}(L_i) + \sum_{i\neq j} \mathrm{Cov}(L_i,L_j)\right) \\
        &= \frac{1}{n} \mathrm{Var}(L_1) \\
        &= \frac{1}{n}\left[\left(\frac{3p^2-3p+1}{p(1-p)}\right)^m - 1 \right].
    \end{align}
    Therefore, the total variation distance is bounded by
    \begin{align}
        \|P-Q\|_{TV}^2 
        &\leq \frac{1}{4} \Exp_{X_1,\ldots,X_n\sim\Binomial(m,p)}\left[\left(\frac{1}{n}\sum_{i=1}^n L_i - 1\right)^2\right] \\
        &\leq \frac{1}{4n}\left[\left(\frac{3p^2-3p+1}{p(1-p)}\right)^m - 1 \right].
    \end{align}
\end{proof}

It is worth noting that for $p\in(0,1)$, $0<p(1-p)\leq 1/4$, and thus
\begin{align}
    \frac{3p^2-3p+1}{p(1-p)} = \frac{1}{p(1-p)} - 3 \geq 1,
\end{align}
with equality if and only if $p=1/2$.
\thm{tv-bound} implies a lower bound on $m$ that scales logarithmically in $n$, stated in the following corollary.

\begin{corollary}\label{cor:n-lower-bound}
    For $p\in(0,1)\setminus\{1/2\}$ and $0<\delta<1/2$, every algorithm that solves \prob{hypothesis} with probability at least $1-\delta$ requires $m\geq \frac{\log(1+4n(1-2\delta)^2)}{\log\eta}$, where $\eta=\frac{1}{p(1-p)}-3>1$.
\end{corollary}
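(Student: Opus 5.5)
The corollary follows from \thm{tv-bound} together with the standard fact that any test's advantage is bounded by total variation distance. Suppose an algorithm solves \prob{hypothesis} with success probability at least $1-\delta$ in both cases. Let $A$ be its acceptance event. Then $\Pr_P[A]\ge 1-\delta$ and $\Pr_Q[A]\le \delta$. Hence
\[
\|P-Q\|_{TV}\ge \Pr_P[A]-\Pr_Q[A]\ge 1-2\delta,
\]
which is positive because $\delta<1/2$. If the algorithm is randomized, the same bound holds for its acceptance probability viewed as a $[0,1]$-valued function of the sample.

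Next, \thm{tv-bound} gives
\[
(1-2\delta)^2\le \|P-Q\|_{TV}^2\le \frac{1}{4n}\left(\eta^m-1\right),
\]
where $\eta=\frac{3p^2-3p+1}{p(1-p)}=\frac{1}{p(1-p)}-3$. Rearranging yields $\eta^m\ge 1+4n(1-2\delta)^2$. As remarked just before the corollary, $\eta>1$ whenever $p\neq 1/2$, since then $p(1-p)<1/4$. So $\log\eta>0$, and taking logarithms and dividing by $\log\eta$ gives
\[
m\ge \frac{\log\bigl(1+4n(1-2\delta)^2\bigr)}{\log\eta}.
\]

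There is no real obstacle here. The only points to check are that the inequality direction is preserved, which holds because $\log\eta>0$, and that $1-2\delta>0$, which legitimizes squaring the TV lower bound.
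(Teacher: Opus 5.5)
Your proposal is correct and follows essentially the same route as the paper: you lower-bound the total variation distance by $1-2\delta$, combine this with the upper bound from \thm{tv-bound}, and rearrange. You also make explicit two checks the paper leaves implicit: that $1-2\delta>0$, which justifies squaring, and that $\log\eta>0$, which preserves the inequality direction when dividing.
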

\begin{proof}
    By \thm{tv-bound}, the total variation distance between the two cases is at most $\left(\frac{1}{4n}(\eta^m - 1)\right)^{1/2}$.
    Any algorithm that solves \prob{hypothesis} with probability at least $1-\delta$ requires total variation distance at least $1-2\delta$, so we must have
    \begin{align}
        (1-2\delta)^2 \leq \frac{1}{4n}(\eta^m-1).
    \end{align}
    Rearranging the inequality gives $m\geq \log_\eta(1+4n(1-2\delta)^2)$.
\end{proof}

Combining \thm{delta-bound} and \cor{n-lower-bound}, we obtain the following sample complexity lower bound for the quantum state identity problem.

\begin{theorem}\label{thm:lower-bound}
    For $0<\epsilon<1/4$, constant $c>0$, and $0<\delta\le 1/2-c$, every quantum algorithm that solves \prob{hard} with parameter $\alpha=\frac{1}{2}\arccos\sqrt{1-\epsilon}$ (and hence every quantum algorithm that solves $\mathrm{QSI}_n^\epsilon$) with probability at least $1-\delta$ must use $\Omega\left(\frac{\log(n/\delta)}{\log(1/\epsilon)}\right)$ copies of each input state, where the constant hidden in the $\Omega(\cdot)$ depends only on $c$.
\end{theorem}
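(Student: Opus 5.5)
We combine the two lower bounds already proved: \thm{delta-bound} gives the $\delta$-dependence and \cor{n-lower-bound} gives the $n$-dependence. Since $\log(n/\delta)=\log n+\log(1/\delta)$, it suffices to show $m=\Omega(\log(1/\delta)/\log(1/\epsilon))$ and $m=\Omega(\log n/\log(1/\epsilon))$ separately. Then $m\ge\frac12(\text{first}+\text{second})$ gives the claim. By \thm{hard-qsi} with $\alpha=\frac12\arccos\sqrt{1-\epsilon}$, any algorithm for $\mathrm{QSI}_n^\epsilon$ solves \prob{hard}, so we work with \prob{hard} throughout.

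\emph{$\delta$-part.} \thm{delta-bound} gives $m\ge \log(1/(4\delta))/\log(1/\epsilon)$. This is $\Omega(\log(1/\delta)/\log(1/\epsilon))$ once $\delta\le 1/16$. For $1/16<\delta\le 1/2-c$, we have $\log(1/\delta)=O(1)$, so this term is absorbed by the $n$-part, which is at least a constant times $\log 2/\log(1/\epsilon)$ because $n\ge2$.

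\emph{$n$-part.} Suppose an algorithm solves \prob{hard} with error $\delta$ using $m$ copies per state. Average its acceptance probability over uniform $\theta\in[0,2\pi)^n$ in the yes case, and over uniform $\theta$ and $i$ in the no case. Its two averaged error probabilities are each at most $\delta$, so the trace distance between the averaged states is at least $1-2\delta$.

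By \lem{binomial}, applied coordinatewise with $p=\sin^2\alpha$ for $\ket{\psi}$ and $1-p=\cos^2\alpha$ for $\ket{\phi}$, the averaged states are jointly diagonal in the product Dicke basis. Their trace distance therefore equals the TV distance of \prob{hypothesis}. \cor{n-lower-bound} then gives $m\ge \log(1+4n(1-2\delta)^2)/\log\eta$ with $\eta=\frac1{p(1-p)}-3$. Since $\delta\le1/2-c$, we get $(1-2\delta)^2\ge4c^2$, so the numerator is at least $\log(1+16c^2n)\ge \Omega_c(\log n)$ (using $n\ge2$).

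The main obstacle is the denominator: we must show $\log\eta=O(\log(1/\epsilon))$ uniformly for $\epsilon\in(0,1/4)$. Here $4p(1-p)=\sin^2 2\alpha=\epsilon$, so $\eta=4/\epsilon-3\le 4/\epsilon$, and $\log\eta\le\log 4+\log(1/\epsilon)$. Because $\epsilon<1/4$, we have $\log(1/\epsilon)>\log4$, so $\log\eta\le 2\log(1/\epsilon)$. Also $\eta>1$ because $\epsilon<1$, i.e., $p\neq1/2$. Combining the two parts yields $m=\Omega_c(\log(n/\delta)/\log(1/\epsilon))$, with the constant depending only on $c$.
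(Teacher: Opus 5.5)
Your proposal is correct and follows essentially the same route as the paper. The $\delta$-dependence comes from \thm{delta-bound}, the $n$-dependence from the Dicke-basis reduction to \prob{hypothesis} together with \cor{n-lower-bound}, and $p(1-p)=\epsilon/4$ gives $\log\eta\le 2\log(1/\epsilon)$. The only cosmetic difference is in the regime $\delta\in(1/16,1/2-c]$: you absorb the $\log(1/\delta)$ term into the $n$-part using $n\ge 2$, whereas the paper keeps the always-positive quantity $\log(1/(4\delta(1-\delta)))$; both arguments are valid.
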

\begin{proof}
    By \thm{hard-qsi}, for $\alpha=\frac{1}{2}\arccos\sqrt{1-\epsilon}$, \prob{hard} is a special case of $\mathrm{QSI}_n^\epsilon$ with local dimension $d=2$; since qubit states can be embedded into any local dimension $d\ge 2$, any algorithm for $\mathrm{QSI}_n^\epsilon$ in local dimension $d$ also solves \prob{hard}.

    Let $\mathcal A$ be any quantum algorithm that uses $m$ copies of each input state and solves \prob{hard} with probability at least $1-\delta$.
    Then $\mathcal A$ must accept the average state
    \begin{align}
        \rho_0 = \Exp_{\theta\sim\mathrm{Uniform}(0,2\pi)^n}[\proj{\Psi_0(\theta)}^{\otimes m}]
    \end{align}
    with probability at least $1-\delta$ and accept the average state
    \begin{align}
        \rho_1 = \Exp_{i\sim\mathrm{Uniform}([n]),\theta\sim\mathrm{Uniform}(0,2\pi)^n}[\proj{\Psi_{1i}(\theta)}^{\otimes m}]
    \end{align}
    with probability at most $\delta$.
    After reordering the tensor factors so that the $m$ copies of each register are grouped together, we have $\ket{\Psi_0(\theta)}^{\otimes m}\cong\bigotimes_{i=1}^n\ket{\psi_{\theta_i}}^{\otimes m}$, and similarly for $\ket{\Psi_{1i}(\theta)}$. Since the phases $\theta_1,\ldots,\theta_n$ are independent, applying \lem{binomial} to each group of $m$ copies shows that $\rho_0$ and $\rho_1$ are both diagonal in the product basis $\{\ket{\wt_{k_1}}\otimes\ldots\otimes\ket{\wt_{k_n}}\}$, with diagonals given by the distributions of the yes and no cases of \prob{hypothesis}, respectively. Hence the trace distance between $\rho_0$ and $\rho_1$ equals the total variation distance in \prob{hypothesis}, and distinguishing $\rho_0$ from $\rho_1$ is exactly as hard as solving \prob{hypothesis}.

    Let $\eta$ be defined as in \cor{n-lower-bound}.
    Since $\cos 2\alpha=\sqrt{1-\epsilon}$, we have $p=\sin^2\alpha = \frac{1-\sqrt{1-\epsilon}}{2}$.
    This gives $p(1-p)=\epsilon/4$ and $\eta=4/\epsilon-3$, so that $\log\eta\le\log(4/\epsilon)\le 2\log(1/\epsilon)$ for $\epsilon<1/4$.
    By \thm{delta-bound} and \cor{n-lower-bound}, we obtain the two bounds
    \begin{align}
        m\geq \frac{\log(1/(4\delta(1-\delta)))}{\log(1/\epsilon)}
        \qquad\text{and}\qquad
        m\geq \frac{\log(1+4n(1-2\delta)^2)}{\log(4/\epsilon-3)}\geq\frac{\log(1+16c^2n)}{2\log(1/\epsilon)},
    \end{align}
    where we used $(1-2\delta)^2\ge 4c^2$. Both right-hand sides are positive, and together with $4\delta(1-\delta)\le 4\delta$ they give
    \begin{align}
        m=\Omega\left(\frac{\log(1/\delta)+\log n}{\log(1/\epsilon)}\right) = \Omega\left(\frac{\log(n/\delta)}{\log(1/\epsilon)}\right),
    \end{align}
    where the constant hidden in the $\Omega(\cdot)$ depends only on $c$.
\end{proof}

Under the promise of \cite{Kada_2008,buhrman2026permutationtestsquantumstate} that every pair of input states is either identical or orthogonal, a single copy of each input state suffices: the Permutation Test has perfect completeness and soundness error at most $1/n$.
In this paper, we analyze our lower bound for $\epsilon>0$, and show a nontrivial bound when $1/\epsilon$ grows no faster than polynomially in $n$ and $1/\delta$.

\section*{AI statement}

The results of this paper were developed through conversations with ChatGPT 5.6-Sol. 
The manuscript was written by the authors, drawing on the output of these conversations. 
The technical content was subsequently checked, and the manuscript proofread, with the help of Claude Fable 5.1 and ChatGPT 5.6-Sol, and the paper was revised to address the issues identified by these tools.
The authors take sole responsibility for the correctness of the results and for the content of the paper. 

\section*{Acknowledgments}

We acknowledge the support by National Science and Technology Council in Taiwan, Grant~No.~114-2222-E-002-002-MY3, and by National Taiwan University, Grant~No.~NTU-CC-114L895503.

\bibliographystyle{plainurl}
\bibliography{refs}

\begin{thebibliography}{10}

\bibitem{baconQuantumSchurClebschGordan2007}
Dave Bacon, Isaac~L. Chuang, and Aram~W. Harrow.
\newblock The quantum {{Schur}} and {{Clebsch-Gordan}} transforms: {{I}}.
  efficient qudit circuits.
\newblock In {\em Proceedings of the Eighteenth Annual {{ACM-SIAM}} Symposium
  on {{Discrete}} Algorithms}, {{SODA}} '07, pages 1235--1244, USA, January
  2007. {Society for Industrial and Applied Mathematics}.
\newblock arXiv:quant-ph/0601001.
\newblock \href {https://arxiv.org/abs/quant-ph/0601001}
  {\path{arXiv:quant-ph/0601001}}.

\bibitem{10.1098/rspa.2022.0733}
Zachary~P. Bradshaw, Margarite~L. LaBorde, and Mark~M. Wilde.
\newblock Cycle index polynomials and generalized quantum separability tests.
\newblock {\em Proceedings of the Royal Society A: Mathematical, Physical and
  Engineering Sciences}, 479(2274):20220733, June 2023.
\newblock arXiv:2208.14596.
\newblock \href {https://arxiv.org/abs/2208.14596} {\path{arXiv:2208.14596}},
  \href {https://doi.org/10.1098/rspa.2022.0733}
  {\path{doi:10.1098/rspa.2022.0733}}.

\bibitem{bruinsma2026quantumfouriertransformsymmetric}
Carli Bruinsma, Dmitry Grinko, and Maris Ozols.
\newblock Quantum {Fourier} transform for the symmetric group.
\newblock arXiv:2608.28569, 2026.
\newblock URL: \url{https://arxiv.org/abs/2608.28569}, \href
  {https://arxiv.org/abs/2608.28569} {\path{arXiv:2608.28569}}, \href
  {https://doi.org/10.48550/arXiv.2608.28569}
  {\path{doi:10.48550/arXiv.2608.28569}}.

\bibitem{buhrman2026permutationtestsquantumstate}
Harry Buhrman, Dmitry Grinko, Philip {Verduyn Lunel}, and Jordi Weggemans.
\newblock Permutation tests for quantum state identity.
\newblock arXiv:2405.09626, 2026.
\newblock URL: \url{https://arxiv.org/abs/2405.09626}, \href
  {https://arxiv.org/abs/2405.09626} {\path{arXiv:2405.09626}}, \href
  {https://doi.org/10.48550/arXiv.2405.09626}
  {\path{doi:10.48550/arXiv.2405.09626}}.

\bibitem{burchardtHighdimensionalQuantumSchur2025}
Adam Burchardt, Jiani Fei, Dmitry Grinko, Martin Larocca, Maris Ozols, Sydney
  Timmerman, and Vladyslav Visnevskyi.
\newblock High-dimensional quantum {{Schur}} transforms.
\newblock arXiv:2509.22640, September 2025.
\newblock \href {https://arxiv.org/abs/2509.22640} {\path{arXiv:2509.22640}},
  \href {https://doi.org/10.48550/arXiv.2509.22640}
  {\path{doi:10.48550/arXiv.2509.22640}}.

\bibitem{cincio2018learning}
Lukasz Cincio, Yi{\u{g}}it Suba{\c{s}}i, Andrew~T. Sornborger, and Patrick~J.
  Coles.
\newblock Learning the quantum algorithm for state overlap.
\newblock {\em New Journal of Physics}, 20(11):113022, 2018.
\newblock arXiv:1803.04114.
\newblock \href {https://arxiv.org/abs/1803.04114} {\path{arXiv:1803.04114}},
  \href {https://doi.org/10.1088/1367-2630/aae94a}
  {\path{doi:10.1088/1367-2630/aae94a}}.

\bibitem{garciaescartin2013swap}
Juan~Carlos Garcia-Escartin and Pedro Chamorro-Posada.
\newblock {SWAP} test and {Hong-Ou-Mandel} effect are equivalent.
\newblock {\em Physical Review A}, 87(5):052330, 2013.
\newblock arXiv:1303.6814.
\newblock \href {https://arxiv.org/abs/1303.6814} {\path{arXiv:1303.6814}},
  \href {https://doi.org/10.1103/PhysRevA.87.052330}
  {\path{doi:10.1103/PhysRevA.87.052330}}.

\bibitem{harrow2005applicationscoherentclassicalcommunication}
Aram~W. Harrow.
\newblock Applications of coherent classical communication and the {Schur}
  transform to quantum information theory.
\newblock arXiv:quant-ph/0512255, 2005.
\newblock PhD thesis, Massachusetts Institute of Technology.
\newblock URL: \url{https://arxiv.org/abs/quant-ph/0512255}, \href
  {https://arxiv.org/abs/quant-ph/0512255} {\path{arXiv:quant-ph/0512255}},
  \href {https://doi.org/10.48550/arXiv.quant-ph/0512255}
  {\path{doi:10.48550/arXiv.quant-ph/0512255}}.

\bibitem{Kada_2008}
Masaru Kada, Harumichi Nishimura, and Tomoyuki Yamakami.
\newblock The efficiency of quantum identity testing of multiple states.
\newblock {\em Journal of Physics A: Mathematical and Theoretical},
  41(39):395309, September 2008.
\newblock arXiv:0809.2037.
\newblock URL: \url{http://dx.doi.org/10.1088/1751-8113/41/39/395309}, \href
  {https://arxiv.org/abs/0809.2037} {\path{arXiv:0809.2037}}, \href
  {https://doi.org/10.1088/1751-8113/41/39/395309}
  {\path{doi:10.1088/1751-8113/41/39/395309}}.

\bibitem{pelecanos2025debiasedkeylsalgorithmnew}
Angelos Pelecanos, Jack Spilecki, and John Wright.
\newblock The debiased {Keyl}'s algorithm: a new unbiased estimator for full
  state tomography.
\newblock arXiv:2510.07788, 2025.
\newblock URL: \url{https://arxiv.org/abs/2510.07788}, \href
  {https://arxiv.org/abs/2510.07788} {\path{arXiv:2510.07788}}, \href
  {https://doi.org/10.48550/arXiv.2510.07788}
  {\path{doi:10.48550/arXiv.2510.07788}}.

\bibitem{wright2016learn}
John Wright.
\newblock {\em How to learn a quantum state}.
\newblock PhD thesis, Carnegie Mellon University, 2016.

\end{thebibliography}

\end{document}